\newtheorem{mythm}{\bf Theorem}
\newtheorem{myprob}{\bf Problem}
\newtheorem{mycol}{\bf Corollary}
\newtheorem{remark}{\bf Remark}
\newtheorem{assumption}{\bf Assumption}
\newcommand{\proof}[1]{\par\noindent\textbf{Proof.} #1 \hfill$\square$\par}
\title{\LARGE Signal Temporal Logic Control Synthesis among \\ Uncontrollable Dynamic Agents with Conformal Prediction}
\author{Xinyi Yu, Yiqi Zhao, Xiang Yin, and Lars Lindemann
	\thanks{This work was supported in part by Annenberg Fellowship, Viterbi School of Engineering Fellowship, National Science Foundation through the grant IIS-SLES-2417075, and National Natural Science Foundation of China 62173226. 
    Xinyi Yu, Yiqi Zhao, and Lars Lindemann are with Thomas Lord Department of Computer Science, University of Southern California, Los Angeles, CA 90089, USA.
	e-mail: {\tt\small $\{$xinyi.yu12, yiqizhao, llindema$\}$@usc.edu}.
    Xiang Yin is with School of Automation and Sensing, Shanghai Jiao Tong University, Shanghai 200240, China.
	e-mail: {\tt\small yinxiang@sjtu.edu.cn}.}
}
\date{}
\begin{document}

% \title{Signal Temporal Logic Control Synthesis among \\ Uncontrollable Dynamic Agents with Conformal Prediction} 
%     % \thanks[footnoteinfo]{This work was supported by the National Natural Science Foundation of China (62061136004, 62173226, 61833012).}
    
%     \author[USC]{Xinyi Yu}\ead{xinyi.yu12@usc.edu},
%     \author[USC]{Yiqi Zhao}\ead{yiqizhao@usc.edu},  
%     \author[SJTU]{Xiang Yin}\ead{yinxiang@sjtu.edu.cn},
%     \author[USC]{and Lars Lindemann}\ead{llindema@usc.edu}

%     \address[USC]{Thomas Lord Department of Computer Science, University of Southern California, Los Angeles, CA 90089, USA.} 
%     \address[SJTU]{Department of Automation, Shanghai Jiao Tong University, Shanghai 200240, China.} 

    % \begin{keyword}                          
    %     Signal Temporal Logic Control, Control among Dynamic Stochastic Agents, Conformal Prediction
    % \end{keyword}         

\maketitle
\thispagestyle{empty}
\pagestyle{plain}

\begin{abstract}
    The control of dynamical systems under temporal logic specifications among uncontrollable dynamic agents is challenging due to the agents' a-priori unknown behavior. Existing works have considered the problem where either all agents are controllable, the agent models are deterministic and known, or no safety guarantees are provided. We propose a predictive control synthesis framework that guarantees, with high probability, the satisfaction of signal temporal logic (STL) tasks that are  defined over a controllable system in the presence of  uncontrollable stochastic agents. We use trajectory predictors and conformal prediction to construct probabilistic prediction regions for each uncontrollable agent that are valid over multiple future time steps. Specifically, we construct a normalized prediction region over all agents and time steps to reduce conservatism and increase data efficiency. We then formulate a worst-case  bilevel mixed  integer program (MIP) that accounts for all agent realizations within the prediction region to obtain an open-loop controller that provably guarantee task satisfaction with high probability. To efficiently solve this bilevel MIP, we propose an equivalent MIP program based on KKT conditions of the original bilevel formulation. Building upon this, we design a closed-loop controller, where both recursive feasibility and task satisfaction can be guaranteed with high probability. We illustrate our control synthesis framework on two case studies.
\end{abstract}

\section{Introduction}

Consider the following scenarios in which autonomous dynamical systems need to operate safely among uncontrollable dynamical agents: mobile service robots operating in pedestrian-rich environments, self-driving cars navigating through dense urban traffic, and leader-follower systems such as truck platoons reducing air drag to improve fuel economy. All these scenarios have in common that the autonomous system needs to satisfy complex safety-critical specifications in the presence of uncontrollable agents. Achieving these specifications is challenged by the fact that communication is prohibited, limited and unstable, or only one-directional,  while the agents' intentions may not always be known. In this setting, the autonomous system needs to make predictions about the behavior of the dynamic agents and select safe actions accordingly. In this work, we propose a predictive control synthesis framework with probabilistic safety for complex temporal logic specifications.

Control of multi-agent systems under temporal logic specifications has attracted much attention, see e.g., \cite{guo2015multi,kloetzer2009automatic,kantaros2020stylus,sun2022multi,kantaros2019temporal,buyukkoccak2021distributed,lindemann2019control,srinivasan2018control}. However, the agent dynamics here are usually known and agents are assumed to be collaborative. In this paper, we are interested in control synthesis among uncontrollable dynamic stochastic agents under temporal logic tasks, and specifically in settings where communication is limited and no information about agent intentions and  models is available. This problem has been studied in the literature, e.g., using sampling-based approaches \cite{kalluraya2023multi, hoxha2016planning}, game theory \cite{li2021safe}, supervisory control theory \cite{kloetzer2012ltl}, or optimization-based methods \cite{ulusoy2014receding,ayala2011temporal,purohit2021dt}. These works, however, only consider qualitative temporal logic specifications such as linear temporal logic, while we are here interested in quantitative temporal logics that allow to measure the quality of task satisfaction. More importantly, these works make simplifying assumptions on the models of the agents  such as being linear or having Gaussian noise distributions. Recent works in \cite{lindemann2023safe,dixit2023adaptive,tonkensscalable,muthali2023multi} make no such assumptions and instead use learning-enabled trajectory predictors and conformal prediction  to obtain probabilistic prediction regions that can be used for downstream control. However, these works do not consider complex specifications expressed in temporal logics. 
Note that extending the control frameworks aforementioned to the temporal logic case is not straightforward, 
since temporal logic specifications are defined over agent trajectories while the aforementioned works only consider point-wise state constraints (see more discussions at the end of section \ref{subsec:prob}).

\textbf{Contribution. } We consider the signal temporal logic (STL) control synthesis problem in the presence of uncontrollable dynamic agents without making any assumptions on the agents' dynamics and distribution.  Specifically, we make the following contributions:
\begin{itemize}
    \item We use trajectory predictors and conformal prediction to construct probabilistic prediction regions that contain the trajectories of multiple uncontrollable agents. \vspace{-6pt}
    \item We propose a new mixed integer program (MIP) based on a worst case formulation of the STL constraints over all uncontrollable agent trajectories within these prediction regions. We present an equivalent MIP program based on KKT conditions of the original MIP to efficiently compute the open-loop controller that provably guarantee task satisfaction with high probability. \vspace{-6pt}
    \item To obtain a closed-loop controller, we introduce a  recursively feasible model predictive control (MPC) framework, where  recursive feasibility through the task  horizon can be guaranteed with high probability.
    We again guarantee task satisfaction with high probability. \vspace{-6pt}
    \item We  validate and illustrate the efficacy of our algorithms on temperature control and robot motion planning problems.
\end{itemize}

 \textbf{Organization. } After discussing  related work in Section \ref{sec:related}, we present preliminaries and the problem definition in Section \ref{sec:prob}. We show how to compute prediction regions for multiple uncontrollable agents using conformal prediction in Section~\ref{sec:cp} and present predictive STL open-loop and closed-loop control synthesis frameworks in Section~\ref{sec:solution_open} and \ref{sec:solution_closed}, respectively. We present simulations in Section \ref{sec:case} and conclude in Section \ref{sec:con}.

\subsection{Related Work}
\label{sec:related}
% \textcolor{red}{Lars: I moved the following paragraph here (from the intro).}

Quantitative temporal logics such as signal temporal logic (STL) \cite{maler2004monitoring} have increasingly been used in control. The reason for their success in control is their ability to reason over the robustness of a control system with respect to the specification via the STL quantitative semantics \cite{fainekos2009robustness,donze2010robust}. The quantitative semantics of STL provide a natural control objective that can be maximized, e.g., via gradient-based optimization \cite{pant2017smooth,pant2018fly,mehdipour2019arithmetic, gilpin2020smooth}, mixed integer programming \cite{raman2014model, sadraddini2018formal,kurtz2022mixed}, or control barrier functions \cite{lindemann2018control, charitidou2022receding, xiao2021high}.

While STL control synthesis problems have been well studied for deterministic systems, the problem becomes more challenging for stochastic systems. There are various approaches that ensure probabilistic task satisfaction, e.g., via mixed integer programming \cite{farahani2017shrinking,jha2018safe,sadigh2016safe}, sampling-based approaches \cite{ilyes2023stochastic,vasile2017sampling,scher2023ensuring}, and stochastic control barrier functions \cite{jagtap2020formal}. Another recent line of work considers the minimization of mathematical risk of  STL specifications \cite{safaoui2020control, akella2024sample,hashemi2023risk}. In addition, there are also some related approaches for reactive STL control synthesis, see e.g., \cite{lindemann2021reactive, gundana2021event, gundana2022event,raman2015reactive}. 
% \XY{last sentence}

Conceptually closest to our work are \cite{lindemann2023safe, dixit2023adaptive, muthali2023multi} where the authors apply conformal prediction to obtain probabilistic prediction regions of uncontrollable dynamic stochastic agents. 
Conformal prediction is a lightweight statistical tool for uncertainty quantification, originally introduced  in \cite{shafer2008tutorial,vovk2005algorithmic}, that has recently attracted  attention within the machine learning community, see e.g., \cite{angelopoulos2021gentle,fontana2023conformal} for up-to-date tutorials.
Compared to \cite{lindemann2023safe, dixit2023adaptive, muthali2023multi}, we consider complex signal temporal logic specifications, that are defined over agent trajectories as opposed to considering simple state constraints. 
Previous works hence do not apply to our settings as we have to account for state couplings at different times.
Furthermore, complex temporal logic tasks require the data-efficient computation of non-conservative probabilistic prediction regions  to promote long horizon control with multiple agents.
We therefore propose an efficient method for computing prediction regions, and then formulate a robust control problem which is fundamentally different and less conservative from Lipschitz-based formulations proposed in the aforementioned works.

\section{Preliminaries \& Problem Definition}\label{sec:prob}

We consider an {autonomous system}, e.g., a robotic system or an autonomous car, that is described by the discrete-time dynamics
\begin{align}\label{eq:sys}
	x_{k+1} = f(x_k, u_k), \;\;\; x_0\in\mathcal{X}
\end{align}
where the function $f: \mathcal{X} \times \mathcal{U} \to \mathcal{X}$ is defined over the state and input domains $\mathcal{X}$ and $\mathcal{U}$, respectively. Here, $x_0$ is the known initial state and $x_k \!\in\! \mathcal{X} \subset \mathbb{R}^{n_x}$ and $u_k \!\in\! \mathcal{U} \subset \mathbb{R}^{m}$ are the $n_x$-dimensional state and the $m$-dimensional control input at time $k$, respectively. To synthesize control sequences later, let us also define $\mathbf{u}_{k:T-1}\coloneqq u_k \hdots u_{T-1}$ and $\mathbf{x}_{k+1:T}\coloneqq x_{k+1} \hdots x_{T}$.

The system in equation \eqref{eq:sys} operates among $N$ uncontrollable and stochastic dynamic agents, e.g., autonomous or remote-controlled robots or humans. We model the state of agent $i$ at time $k$ as a random variable $Y_{k,i}\in \mathcal{Y}_i \subset \mathbb{R}^{n_{y,i}}$ with state domain $\mathcal{Y}_i$, and we define the random state of all agents as $Y_k\coloneqq [Y_{k,1},\hdots,Y_{k,N}] \in \mathcal{Y}\subset \mathbb{R}^{n_y}$ with $n_y\coloneqq n_{y,1}+\hdots+n_{y,N}$ and $\mathcal{Y}\coloneqq \mathcal{Y}_1 \times \hdots \times \mathcal{Y}_N$. We do not assume to have any a-priori knowledge of the agent dynamics and intentions, and since the agents are uncontrollable, their finite-length trajectories are a-priori unknown. However, we assume that their  trajectories follow an unknown distribution $\mathcal{D}$, i.e., 
\[
    Y\coloneqq (Y_0, Y_1, \dots) \sim \mathcal{D}.
\]
We further assume that we have access to $\bar{K}$ realizations from $\mathcal{D}$.
\begin{assumption}\label{ass:data}
    We have access to $\bar{K}$ independent trajectories $Y^{(j)} \coloneqq (Y^{(j)}_1, Y^{(j)}_2, \dots)$ from the distribution $\mathcal{D}$. We split the trajectories into  calibration and training sets $D_{cal}\coloneqq \{Y^{(1)}, \dots, Y^{(K)}\}$ and $D_{train}\coloneqq \{Y^{(K+1)}, \dots, Y^{(\bar{K})}\}$, respectively.
\end{assumption}
% \XY{not clear whether they are iid or the sequence of Y follows D} \YXY{Hi Xiang! Many thanks for the comments. You are correct and I will change it later. I got a bunch of comments from Lars and  I am starting a new round to iterate it. Please stop reading it for now and I'll let you know after I finish it.}
Modeling dynamic agents by a distribution $\mathcal{D}$ includes general classes of systems such as discrete-time stochastic hybrid systems and Markov decision processes. We make the restriction that $\mathcal{D}$ does not depend on the behavior of the autonomous system in  \eqref{eq:sys}, i.e., $\mathcal{D}$ does not depend on $x$. Similar modeling assumptions have been made in \cite{lindemann2023safe} and are justified in many applications, e.g., when conservative control actions are taken in autonomous driving that do only alter the behavior of pedestrians minimally. By using robust or adaptive versions of conformal prediction, see e.g., \cite{cauchois2024robust,gibbs2021adaptive}, we can relax this assumption. Lastly, Assumption~\ref{ass:data} is not restrictive in most applications, e.g., past trajectory data of dynamic agents is available in robotic or transportation systems. 

In this paper, we consider the control of the system in  \eqref{eq:sys} under  temporal logic specifications that are jointly defined over the system and the  uncontrollable agents. Therefore, we define their joint state at time $k$ as $S_k \coloneqq [x_k,Y_{k}] \!\in\! \mathcal{X} \times \mathcal{Y} \subset \mathbb{R}^{n}$ with $n\coloneqq n_x+n_y$ and the joint state sequence as  $\mathbf{S}\coloneqq S_0,S_1,\hdots$.

\subsection{Signal Temporal Logic Control Synthesis among Uncontrollable Agents}\label{subsec:prob}
In this paper, we use signal temporal logic to describe the tasks defined over the system in \eqref{eq:sys} and uncontrollable agents. 
Signal temporal logic was originally introduced in \cite{maler2004monitoring}, and we here only give a brief introduction. 
The syntax of discrete-time finite-length STL formulae is
\begin{align}\label{eq:stl}
    \phi ::=   \top\mid \pi^\mu \mid \neg \phi \mid \phi_1 \wedge \phi_2 \mid \phi_1 \textbf{U}_{[a,b]} \phi_2,
\end{align}
where $\top$ is the \textsf{true} symbol ($\bot$ denotes the \textsf{false} symbol) and $\pi^\mu:\mathbb{R}^n \to \{\top, \bot \}$ is a predicate whose truth value   is determined by the sign of an underlying predicate function $\mu:\mathbb{R}^n \to \mathbb{R}$, i.e., for the state $s\in\mathbb{R}^n$ we have that $\pi^\mu(s)\coloneqq \top$ if and only if $\mu(s) \geq 0$.
The symbols $\neg$ and $\wedge$ denote the standard Boolean operators ``negation" and ``conjunction", respectively, and $\textbf{U}_{[a,b]}$ is the temporal operator ``\emph{until}" with $a \leq b, a,b\in \mathbb{N}$ with $\mathbb{N}$ being the natural numbers.
As we consider a discrete-time setting, we use $[a, b]$ to denote the discrete-time interval $[a, b] \cap \mathbb{N}$.
These operators can be used to define ``disjunction"  by $\phi_1 \vee \phi_2:=\neg(\neg \phi_1 \wedge \neg \phi_2)$, ``implication" by $\phi_1 \to \phi_2:= \neg \phi_1 \vee   \phi_2$, ``eventually" by $\mathbf{F}_{[a,b]} \phi:= \top \mathbf{U}_{[a,b]} \phi$, and ``always" by $\mathbf{G}_{[a,b]} \phi:=\neg \mathbf{F}_{[a,b]} \neg \phi$.

% The semantics of an STL formula $\phi$ is omitted due to space limitation and details can be found in \cite{maler2004monitoring}. 
\textbf{STL Semantics.} STL formulae are evaluated over trajectories $\mathbf{s} \coloneqq s_0s_1\cdots$ where $\mathbf{s}$ could be a realization of $\mathbf{S}$.  
The notation $(\mathbf{s},k) \models \phi$ denotes that $\mathbf{s}$ satisfies the STL formula $\phi$ at time $k$. Particularly, we have 
$(\mathbf{s},k)\models \pi^\mu$ iff $\mu(s_k)\geq 0$,  $(\mathbf{s},k)\models \neg \phi$ iff $(\mathbf{s},k)\not\models \phi$, $(\mathbf{s},k)\models \phi_1 \wedge \phi_2$ iff $(\mathbf{s},k)\models \phi_1$ and $(\mathbf{s},k)\models \phi_2$,
and 
$(\mathbf{s},k)\models \phi_1 \mathbf{U}_{[a,b]} \phi_2$ 
iff $\exists k' \!\in\! [k+a, k+b]$ such that $(\mathbf{s},k') \models \phi_2$
and  $\forall k'' \!\in\! [k, k']$, we have $(\mathbf{s},k'') \models \phi_1$, i.e., 
$\phi_2$ will hold some time between $[k+a, k+b]$ in the future and until then $\phi_1$ always holds.  We write $\mathbf{s} \models \phi$ instead of $(\mathbf{s},0) \models \phi$ for simplicity.

\textbf{STL Quantitative Semantics.}
In addition to these \emph{qualitative} Boolean semantics, we can also define \emph{quantitative} semantics $\rho^\phi(\mathbf{s},k)\in\mathbb{R}$, also referred to as robust semantics, that describe how well $\phi$ is satisfied \cite{Alexandre2010}. Formally, we have 
\begin{align*}
    \rho^\top({\mathbf{s},k}) & \!\coloneqq\! \infty\\
    \rho^{\pi^\mu}({\mathbf{s},k})&\!\coloneqq\! \mu(s_k)\\
    \rho^{\neg \phi}({\mathbf{s},k})&\!\coloneqq\! -\rho^\phi({\mathbf{s},k})\\
    \rho^{\phi_1 \wedge \phi_2}({\mathbf{s},k})&\!\coloneqq\! \min(\rho^{\phi_1}({\mathbf{s},k}),\rho^{\phi_2}({\mathbf{s},k}))\\
    \rho^{\phi_1 \mathbf{U}_{[a,b]} \phi_2}({\mathbf{s},k})&\!\coloneqq\! \!\! \max_{k' \in[k+a, k+b]} \!\! \min(\rho^{\phi_2}({\mathbf{s},k'}), \!\! \min_{k'' \in [k, k']} \!\! \rho^{\phi_1}({\mathbf{s},k''})).
\end{align*}

\textbf{Problem Formulation.} In this paper, we consider bounded STL formulae $\phi$ where  time intervals $[a,b]$ in $\phi$ are bounded. The satisfaction of a bounded STL formula $\phi$ can be computed over bounded trajectories. Specifically, we require trajectories with a minimum length of $T_\phi$ where $T_\phi$ is the formula horizon formally defined in \cite{dokhanchi2014line, sadraddini2015robust}. For example, for $\phi \coloneqq \mathbf{F}_{[3,8]}\mathbf{G}_{[1,2]} \pi^{\mu} $, we have $T_\phi = 10$. We additionally assume that $\phi$ is in positive normal form, i.e., the formula $\phi$ contains no negations, and is build from convex and differentiable predicate functions $\mu$. The former assumption is without loss of generality as every STL formula can be re-written into an equivalent STL formula that is in positive normal form \cite{sadraddini2015robust,sadraddini2018formal}. The latter assumption is made for computational reasons and can be relaxed, since every STL formula can be transformed into positive normal form (see \cite{sadraddini2015robust} for more details). We summarize these assumptions next.

\begin{assumption}\label{ass:noneg}
We consider bounded STL formulae $\phi$ in positive normal form. We further assume that all  predicate functions $\mu$ in $\phi$ are convex and differentiable with respect to $Y_\tau$.
\end{assumption}

We impose specifications $\phi$ that satisfy Assumption \ref{ass:noneg} over the random trajectory $\mathbf{S}$, i.e., over the system trajectory $x_k$ and the random trajectories of uncontrollable dynamic agents $Y_k$. Formally, we consider the following problem in this paper.

\begin{myprob}\label{prob}
    Given the system in \eqref{eq:sys}, the uncontrollable agents $Y\coloneqq (Y_0, Y_1, \dots) \sim \mathcal{D}$, a set of  trajectories $D_{cal}$ and $D_{train}$ that satisfy Assumption \ref{alg:offline}, an STL formula $\phi$ from \eqref{eq:stl} that satisfies Assumption \ref{ass:noneg}, and a failure probability $\delta \!\in\! (0,1)$, compute the control inputs $u_k$ for all $k\in \{0,\hdots, T_\phi-1\}$ such that the STL formula is satisfied with a probability of at least $1-\delta$, i.e., 
    \[
        \text{Prob}(\mathbf{S} \models \phi) \geq 1-\delta.
    \]
\end{myprob}

The challenge for solving Problem \ref{prob} is the dependency of $\phi$ on the agents in $Y$, i.e., the joint trajectory $\mathbf{S}$ is only partially controllable. Therefore, our approach is to predict uncontrollable agents, construct prediction regions that capture uncertainty of these prediction, and synthesize the controller using this information. 
While our work is inspired by \cite{lindemann2023safe, dixit2023adaptive}, these works cannot be simply extended to solve our problem, since they consider state constraints of the form $c(x_k, y_k) \ge 0$ that are enforced at each time step separately. We instead consider STL specifications defined over the trajectory $\mathbf{S}$, where the task description cannot be separated into different time steps.
This temporal coupling makes the problem of designing closed-loop controllers different from existing works, since during the online execution we cannot discard constraints from previous time steps.
Furthermore, applying a Lipschitz reformulation to this problem as in \cite{lindemann2023safe, dixit2023adaptive} will bring conservative results.
Specifically, these works leverage the Lipschitz constant of the  function $c$ to capture the worst case over the prediction region.
In the STL case, taking the Lipschitz constant of the robustness function $\rho^\phi$ may be conservative and not practical.

Our solution consists of an offline and an online phase. In the offline phase, we train trajectory predictors using historical data and compute prediction regions via conformal prediction (Section 3). These regions capture the uncertainty of uncontrollable agents' trajectories with a probability of at least $1-\delta$. In the online phase, for open-loop control, we solve a mixed integer program (MIP) at the initial time step using the open-loop prediction regions, ensuring probabilistic satisfaction of the STL specification (Section 4). For closed-loop control, we iteratively update the predictions and prediction regions at each time step, solving the MIP at each time step and guaranteeing recursive feasibility and task satisfaction with high probability (Section 5). 

\subsection{Trajectory Predictors and Conformal Prediction}
\label{sec:CP}

\textbf{Trajectory Predictors. } From the training set $D_{train}$, we first train a trajectory predictor that estimates future states $(Y_{k+1},\hdots,Y_{T_\phi})$ from past observations $(Y_0,\hdots Y_k)$. Specifically, we denote the predictions made at time $k$ as $(\hat{Y}_{k+1|k}, \dots, \hat{Y}_{T_\phi|k})$. For instance, we can use recurrent neural networks (RNNs) \cite{salehinejad2017recent}, long short term memory (LSTM) networks \cite{yu2019review}, or  transformers~\cite{han2021transformer}. 
In this paper, we do not make assumption on the trajectory predictor.

\textbf{Conformal Prediction. } The predictions $\hat{Y}_{\tau|k}$ of $Y_{\tau}$ for $\tau> k$, which may be obtained from learning-enabled components, may not always be accurate. We will use conformal prediction to obtain prediction regions for $Y_{\tau}$ that are valid with high probability. We leverage conformal prediction which is a statistical tool for uncertainty quantification with minimal assumptions \cite{shafer2008tutorial,vovk2005algorithmic}.  

Let $R^{(0)}, \dots, R^{(K)} \in \mathbb{R}$ be $K+1$ i.i.d. random variables which are referred to as the \emph{nonconformity score}.
For instance, we may define the prediction error $R^{(j)} \coloneqq ||Y_\tau^{(j)} - \hat{Y}_{\tau|k}^{(j)}||$ at time $\tau>k$ for all calibration trajectories $Y^{(j)}\in D_{cal}$. Naturally, a small value of $R^{(j)}$ in this case implies accurate predictions. We later present  nonconformity scores to obtain prediction regions for uncontrollable agents over multiple time steps. Given a failure probability $\delta \!\in\! (0,1)$, our goal is to compute  $C$ from $R^{(1)},\dots, R^{(K)}$ such that the probability of $R^{(0)}$ being bounded by $C$ is not smaller than $1-\delta$, i.e., such that
\[
    \text{Prob}(R^{(0)} \leq C) \geq 1-\delta.
\]
By the results in Lemma 1 of \cite{tibshirani2019conformal}, we can compute 
\begin{align*}
    C:=\text{Quantile}_{1-\delta}(R^{(1)}, \dots, R^{(K)},\infty)
\end{align*}
as the $(1-\delta)$th quantile over the empirical distribution of  $R^{(1)}, \dots, R^{(K)}$  and $\infty$.\footnote{Formally, we define the quantile function as $\text{Quantile}_{1-\delta}(R^{(1)}, \dots, R^{(K)},\infty):=\text{inf}\{z\in \mathbb{R}|\text{Prob}(Z\le z)\ge 1-\delta\}$ with the random variable $Z:=1/(K+1)(\sum_i \delta_{R^{(i)}}+\delta_{\infty})$ where $\delta_{R^{(i)}}$ and $\delta_{\infty}$ are dirac distributions centered at $R^{(i)}$ and $\infty$, respectively.}
By assuming that $R^{(1)}, \dots, R^{(K)}$ are sorted in non-decreasing order and by adding $R^{(K+1)} \coloneqq \infty$, we have that $C = R^{(p)}$ where $p \coloneqq \lceil (K+1)(1-\delta) \rceil$, i.e., $C$ is the $p$th smallest nonconformity score. To obtain meaningful prediction regions, we remark that $K\ge \lceil (K+1)(1-{\delta})\rceil$ has to hold. Otherwise we get $C=\infty$. Lastly, note that the guarantees $\text{Prob}(R^{(0)} \leq C) \geq 1-\delta$ are marginal over the randomness in $R^{(0)}, R^{(1)}, \hdots, R^{(K)}$ (with $C$ depending on $ R^{(1)}, \hdots, R^{(K)}$) as opposed to being conditional on $ R^{(1)}, \hdots, R^{(K)}$.

In the following, the open-loop controller is introduced in Sections \ref{sec:cp} and \ref{sec:solution_open}, and we propose the closed-loop controller in Section \ref{sec:solution_closed}.

\section{Open-Loop Prediction Regions for Multiple Uncontrollable Agents}\label{sec:cp}

To solve Problem \ref{prob}, we  use trajectory predictions and compute probabilistic prediction regions for each uncontrollable agent that are valid over multiple time steps. 
Recall that $Y_{\tau,i}$ denotes the random state of agent $i$ at time $\tau$, and let $\hat{Y}_{\tau|0,i}$ denote the prediction of  $Y_{\tau,i}$ made at time $k=0$. 

We are interested in constructing prediction regions for all predictions $\tau \!\in\! \{1, \dots, T_{\phi}\}$ made at time $k\coloneqq 0$  and  for all agents $i \!\in\!  \{1, \dots, N\}$, i.e., such that 
\begin{align}\label{eq:cp_open_}
    \text{Prob}(||Y_{\tau,i} - \hat{Y}_{\tau|0,i}|| \leq C_{\tau|0,i}, \forall (\tau,i) \!\in\! \{1, \dots, T_{\phi}\}\times \{1, \dots, N\}) \!\geq\! 1-\delta, 
\end{align}
where  $C_{\tau|0,i}$ indicates the $\tau$-step ahead prediction error of agent $i$ for predictions $\hat{Y}_{\tau|0,i}$.
In the following, we show how to compute these prediction regions using conformal prediction, and the formal result is given in Theorem \ref{thm:proba}.

\textbf{Data-efficient and accurate prediction regions.} Computing $\tau$-step ahead prediction regions via $C_{\tau|0,i}$ is in general difficult. Existing works \cite{lindemann2023safe,stankeviciute2021conformal} are conservative, data-inefficient, and do not provide prediction regions for multiple agents. The reason is that $C_{\tau|0, i}$ are independently computed for all $(\tau,i) \!\in\! \{1, \dots, T_{\phi}\}\times \{1, \dots, N\}$. This independent computation is followed by  a conservative and inefficient union bounding argument (more details below). 

Instead, we obtain these prediction regions jointly by extending the framework in \cite{zhao2024robust} to incorporate multiple agents. For the open-loop case, we define the normalized nonconformity score
\begin{align}
    & R_{OL}^{(j)} \coloneqq  \max_{\substack{(\tau,i) \in \{1,\dots, T_\phi\} \\ \times \{1,\dots, N\}}}
    \frac{||Y_{\tau,i}^{(j)}-\hat{Y}_{\tau|0,i}^{(j)}||}{\sigma_{\tau|0, i}}\label{eq:nonscore_open}  
\end{align}
for calibration trajectories $Y^{(j)}\in D_{cal}$  where $\sigma_{\tau|0, i}>0$ are constants that normalize the prediction error to  $[0,1]$ for all times $\tau$ and agents $i$. This nonconformity score is inspired by \cite{cleaveland2024conformal}, where a mixed integer linear complementarity program is used to find optimal constants $\sigma_{\tau|0, i}$.
However, we do not need to solve nonconvex optimization problems  and instead simply normalize over the training dataset $D_{train}$ which has practical advantages. Specifically,  we use the normalization $\sigma_{\tau|0, i}:=\max_{j} \|Y^{(j)}_{\tau,i} - \hat{Y}^{(j)}_{\tau|0,i}\|$ for training trajectories  $Y^{(j)}\in D_{train}$ with the assumption that $\sigma_{\tau|0, i}$ is non-zero. 
This is important as the optimization problem in \cite{cleaveland2024conformal} may have no solution, or the optimal solution cannot be found due to the nonconvexity of the problem, or it takes too long to solve in practice, especially for large $T_\phi$ and $N$. The reader can find an empirical comparison in Section \ref{subsec:case1}, where we show that we can obtain similar prediction regions as \cite{cleaveland2024conformal}, but in a computationally more efficient manner.

The intuitions behind the nonconformity score in \eqref{eq:nonscore_open} are that normalization via $\sigma_{\tau|0, i}$ will prevent the prediction error $||Y_{\tau,i}^{(j)}-\hat{Y}_{\tau|0,i}^{(j)}||$ for a specific agent and time to dominate the $\max$ operator. This would result in overly conservative prediction regions for other times and agents.

\textbf{Valid Prediction Regions with Conformal Prediction. } We can now apply conformal prediction, as introduced in Section \ref{sec:CP}, to the open-loop nonconformity score  $R^{(j)}_{OL}$. By re-normalizing these nonconformity scores, we then obtain prediction regions as in equations  \eqref{eq:cp_open_}.

\begin{mythm}\label{thm:proba}
    Given the random trajectory $Y\coloneqq (Y_0, Y_1, \dots) \sim \mathcal{D}$, a set of trajectories $D_{cal}$ and $D_{train}$ that satisfy Assumption \ref{ass:data}, and the failure probability $\delta \!\in\! (0,1)$, then the prediction regions in equation \eqref{eq:cp_open_} hold for the choice of 
    \begin{align}
        C_{\tau|0,i} & \coloneqq C_{OL} \sigma_{\tau|0, i}, \label{eq:C_open}
    \end{align}
    where $\sigma_{\tau|0, i}$ are positive constants and
    \begin{align*}
        C_{OL}&:=\text{Quantile}_{1-\delta}(R_{OL}^{(1)},\hdots,R_{OL}^{(K)}, \infty).
    \end{align*} 
\end{mythm}

\proof{
For a test trajectory $Y\coloneqq (Y_0, Y_1, \dots) \sim \mathcal{D}$, we immediately know from conformal prediction that 
\begin{align}\label{eq:open_proof}
    \text{Prob}\Big(\max_{\substack{(\tau,i)\in\{1,\dots,T_{\phi}\} \\ \times\{1,\dots,N\}}}
    \frac{||Y_{\tau,i}\!-\!\hat{Y}_{\tau|0,i}||}{\sigma_{\tau|0, i}} \leq C_{OL}\Big) \geq 1\!-\!\delta
\end{align}
by construction of $C_{OL}$. We then see that equation \eqref{eq:open_proof} implies that 
\begin{align}\label{eq:open_proof_}
    \text{Prob}\Big(&\frac{||Y_{\tau,i}-\hat{Y}_{\tau|0,i}||}{\sigma_{\tau|0, i}} \leq C_{OL}, \forall (\tau,i) \!\in\!\{1,\dots, T_\phi\} \times \{1,\dots, N\}\Big) \geq 1-\delta.
\end{align}
Since $\sigma_{\tau|0, i}>0$, we can multiply the inequality in \eqref{eq:open_proof_} with $\sigma_{\tau|0, i}$. We then see that \eqref{eq:cp_open_} is valid with the choice of $C_{\tau|0,i} \coloneqq C_{OL} \sigma_{\tau|0, i}$ which completes the proof.
}

\begin{remark}
    In this work, the prediction regions $C_{\tau|0,i}$ are norm balls because the nonconformity score is defined using the Euclidean norm. This choice was made for simplicity and interpretability. However, the conformal prediction framework is flexible and can accommodate other nonconformity scores, which would lead to different prediction regions, e.g., ellipsoids \cite{messoudi2022ellipsoidal} or multi-modal regions \cite{tumu2024multi}. While we chose norm balls for simplicity, the proposed framework is general and can support any convex or even nonconvex shape by defining the nonconformity score appropriately.
\end{remark}

\textbf{Comparison with existing work.}   With slight modification of \cite{lindemann2023safe,stankeviciute2021conformal}, an alternative approach for obtaining $C_{\tau|0,i}$  would be to compute $C_{\tau|0,i}\coloneqq \text{Quantile}_{1-\delta/(T_\phi N)}(R^{(1)}, \dots, R^{(K)},\infty)$ with  the nonconformity score $R^{(j)}\coloneqq ||Y_{\tau,i} - \hat{Y}_{\tau|0,i}||$. 
However, it is evident that taking the $1-\delta$ quantile in Theorem \ref{thm:proba} compared to the $1-\delta/(T_\phi N)$ quantile is much more data inefficient. Specifically, for a fixed $\delta\in (0,1)$, we only require  $K\ge (1-\delta)/\delta$ calibration trajectories as opposed to $K\ge (T_\phi N-\delta)/\delta$ calibration trajectories.\footnote{Computing the $1-\delta$ and $1-\delta/(T_\phi N)$ quantiles requires that $\lceil (K+1)(1-\delta) \rceil \le K$ and $\lceil (K+1)(1-\delta/(T_\phi N)) \rceil\le K$, respectively. } Intuitively, it is also evident that our choice of normalization constants $\sigma_{\tau|0,i}$ provides less conservative prediction regions in practice as evaluating the $1-\delta$ quantile is more favorable compared to the $1-\delta/(T_\phi N)$ for larger task horizons $T_\phi$ and number of agents $N$. Finally, we remark that one could even consider obtaining conditional guarantees using conditional conformal prediction \cite{gibbs2023conformal}. Similarly, we can use versions of robust conformal prediction to be robust against distribution shifts  \cite{cauchois2024robust,tibshirani2019conformal, zhao2024robust}, e.g., caused by interaction between the system in \eqref{eq:sys} and the distribution $\mathcal{D}$.

\textbf{For practitioners. } We summarize our method to obtain the prediction regions in equation \eqref{eq:cp_open_} via Theorem \ref{thm:proba} in Algorithm \ref{alg:offline}, which is executed offline. We note that, in order to compute $C_{OL}$, we sort the corresponding nonconformity scores (after adding $R_{OL}^{(K+1)}:=\infty$) in nondecreasing order and set $C_{OL}$ to be the $p$th smallest nonconformity score where $p \coloneqq \lceil (K+1)(1-\delta) \rceil$, as explained in Section \ref{sec:CP}. 
Specifically, the variable $p$ is set in line 1. 
We then compute the predictions $\hat{Y}_{\tau|0,i}^{(j)}$ for all calibration data in lines 2-5 and calculate the normalization constants $\sigma_{\tau|0,i}$ over the training data in line 6. 
Finally, we compute the nonconformity scores $R_{OL}^{(j)}$ for all calibration data in lines 7-9 and apply conformal prediction to obtain $C_{OL}$ in lines 10-11. According to Theorem \ref{thm:proba}, we then obtain the open-loop prediction regions $C_{\tau|0,i}$.

\begin{algorithm}[ht]
	\caption{Multi-Agent Prediction Regions (Offline)}
	\label{alg:offline}
	\KwIn{failure probability $\delta$, dataset $D$, horizon $T_\phi$}
	\KwOut{Constants $C_{OL}$ and $\sigma_{\tau|0,i}$} 

    $p \gets \lceil (K+1)(1-\delta) \rceil$ \\
	\ForAll{$i \in \{1,\dots, N\}$}
	{	
        \ForAll{$\tau \in [k+1, T_\phi]$}
        {
            \ForAll{$j\in \{1,\hdots,\bar{K}\}$}
            {
                Compute  $\hat{Y}_{\tau|0, i}^{(j)}$ from  $(Y_{0,i}^{(j)}, \dots, Y_{k,i}^{(j)})$ 
            }
            $\sigma_{\tau|0,i} \gets \max_{j \in \{K+1,\dots,\bar{K}\}}||Y_{\tau,i}^{(j)}-\hat{Y}_{\tau|0,i}^{(j)}||$
        }
	}
    \ForAll{$j \in \{1,\hdots,K\}$}
    {
        Compute $R_{OL}^{(j)}$ as in equation \eqref{eq:nonscore_open}
    }
    Set $R^{(K+1)}_{OL} \gets \infty$\\
    Sort $R^{(j)}_{OL}$ in nondecreasing order \\
    Set $C_{OL} \gets R^{(p)}_{OL}$
\end{algorithm}

% union bound case
% \begin{algorithm}[ht]
% 	\caption{Region Prediction}
% 	\label{alg:offline}
% 	\KwIn{failure probability $\delta$, dataset $D_{cal}$, horizon $T_\phi$}
% 	\KwOut{Predicted regions $C_{\tau|k}$ for all possible $k$ and $\tau$} 

%     $\bar{\delta} \gets \delta/T_{\phi}$, $p \gets \lceil (|D_{cal}|+1)(1-\bar{\delta}) \rceil$ \\
% 	\ForAll{$k \in [0, T_\phi-1]$}
% 	{	
%         \ForAll{$\tau \in [k+1, T_\phi]$}
%         {
%             \ForAll{$y^{(i)} \in D_{cal}$}
%             {
%                 Obtain predictions $\hat{y}_{\tau|k}^{(i)}$ based on $(y_{0}^{(i)}, \dots, y_{k}^{(i)})$ \\
%                 $R^{(i)}_{\tau|k} \gets ||y_\tau^{(i)} - \hat{y}_{\tau|k}^{(i)}||$ \\
%             }
%             $R^{(|D_{cal}|+1)}_{\tau|k} \gets \infty$ \\
%             Sort $R^{(i)}_{\tau|k}$ in nondecreasing order \\
%             $C_{\tau|k} \gets R^{(p)}_{\tau|k}$
%         }
% 	}
% \end{algorithm}

\section{Open-Loop Predictive STL Control Synthesis}\label{sec:solution_open}
In this section, we use the previously computed prediction regions and design predictive controllers that solve Problem \ref{prob}, i.e., controllers $u$  that ensure that $x$ is such that $\text{Prob}(\mathbf{S} \models \phi) \geq 1-\delta$. Specifically, we use the prediction regions and present an MIP encoding for the STL task $\phi$. We first present a qualitative and a quantitative encoding using the Boolean and quantitative semantics of $\phi$ in Sections \ref{subsec:qua} and \ref{subsec:quan}, respectively. In Section \ref{subsec:loop}, we then use these encodings and present the open-loop controller.

\subsection{Qualitative STL Encoding with Multi-Agent Prediction Regions}\label{subsec:qua}

A standard way of encoding STL tasks for deterministic trajectories $\mathbf{s}$ is by using an MIP encoding \cite{raman2014model,raman2015reactive}. The idea is to introduce a binary variable $z_0^\phi$ and a set of mixed integer constraints over $\mathbf{s}$ such that $z_0^\phi = 1$ if and only if $\mathbf{s}\models \phi$. In this paper, however, we deal with stochastic trajectories $\mathbf{S}$ that consist of the system trajectory $x$ and the stochastic trajectories of dynamic agents $Y$. 
We will instead introduce a binary variable $\bar{z}_0^\phi$ and a set of mixed integer constraints over $x$ and the prediction regions in \eqref{eq:cp_open_} such that $\bar{z}_0^\phi = 1$ implies that $\mathbf{S}\models \phi$ with a probability of at least $1-\delta$. In this way, $\bar{z}_0^\phi$ can be seen as a probabilistic version of $z_0^\phi$. We emphasize that our MIP encoding provides sufficient but not necessary conditions for the task satisfaction. Specifically, we can not guarantee necessity due to the use of probabilistic prediction regions.

In the remainder, we generate the mixed integer constraints that define $\bar{z}_0^\phi$ recursively on the structure of $\phi$. Our main innovation is a probabilistic MIP encoding for predicates, while the encoding of Boolean and temporal operators follows \cite{raman2014model,raman2015reactive}. We here recall from Assumption \ref{ass:noneg} that the formula $\phi$ is in positive normal form so that we do not need to consider negations. To provide a general encoding that can be used for open-loop and closed-loop control, we let $k\ge 0$ denote the current time.  
For the open-loop case, we only need to consider the initial time, i.e., $k=0$.
As we will treat the system state $x_k$ separately from the state of uncontrollable dynamic agents $Y_k$ contained in $S_k$, we will write  $\mu(x_k, Y_k)$ instead of $\mu(s_k)$.

\textbf{Predicates $\pi^\mu$.} Let us denote the set of predicates $\pi^\mu$ in $\phi$ by $\mathcal{P}$. Now, for each predicate $\pi^\mu\in\mathcal{P}$ and for each time $\tau\ge 0$, we introduce a binary variable $\bar{z}_{\tau|k}^{\mu} \!\in\! \{0,1\}$.
If $\tau \leq k$, then we have observed the value of $Y_\tau$ already, and we set $\bar{z}_{\tau|k}^{\mu} = 1$ if and only if $\mu(x_\tau, Y_\tau) \geq 0$. If $\tau > k$, then we would like to constrain $\bar{z}_{\tau|k}^{\mu} $ such that  $\bar{z}_{\tau|k}^{\mu} = 1$ implies $\text{Prob}(\mu(x_{\tau|k}, Y_{\tau}) \geq 0) \geq 1 - \delta$. Motivated by Theorem \ref{thm:proba}, we use the Big-M method and define the constraint
\begin{align}\label{eq:predicate_}
    - \min_{y \in \mathcal{B}_{\tau|k}} \mu(x_{\tau|k}, y) & \leq M (1-\bar{z}_{\tau|k}^{\mu}) -\epsilon, 
\end{align}
where $M$ and $\epsilon$ are sufficiently large and sufficiently small positive constants, respectively, see \cite{raman2014model, bemporad1999control} for more details. The minimization of $\mu(x_{\tau|k}, y)$ over the $y$ component within the set $\mathcal{B}_{\tau|k}$ will account for all $Y_{\tau}$ that are contained within our probabilistic prediction regions. Specifically, the set $\mathcal{B}_{\tau|k}$ contains all states contained within a geometric norm ball that is centered around the predictions $\hat{Y}_{\tau|k,i}$ with a size of $C_{\tau|k, i}$ and is defined as
\begin{align}\label{eq:B_tau}
    \mathcal{B}_{\tau|k} \coloneqq  \{ [y_{1}, \dots, y_{N}]^\top \!\in\! \mathbb{R}^{n_y} \mid \forall i\in \{1,\hdots,N\}, ||y_{i} -\hat{Y}_{\tau|k, i}|| \leq C_{\tau|k, i}\}. 
\end{align}
In the case when $k=0$, we use the values of $C_{\tau|0, i}$ as in \eqref{eq:C_open} that define the open-loop prediction region in \eqref{eq:cp_open_}. This implies that $\text{Prob}(Y_\tau\in \mathcal{B}_\tau, \tau\in\{1,\hdots,T_\phi\})\ge 1-\delta$.

By Theorem \ref{thm:proba} and the construction in equation \eqref{eq:predicate_}, we have the following straightforward result.

\begin{mycol}\label{col:predicate}
    If $\bar{z}_{\tau|0}^\mu = 1$ for all pairs $(\tau, \pi^\mu) \in TP$, where $TP \subseteq \{0, \dots, T_\phi\}\times \mathcal{P}$ is the set of some pairs, then it holds that $\text{Prob}(\mu(x_{\tau|0}, Y_{\tau}) \geq 0, \forall (\tau, \pi^\mu)\in TP)\ge 1-\delta$. 
\end{mycol}
\proof{
    If $\bar{z}_{\tau|0}^\mu = 1$ for all pairs $(\tau, \pi^\mu) \in TP$, then it follows by equation \eqref{eq:predicate_} that $x_{\tau|0}$ is such that $0< \epsilon\le \text{min}_{y \in \mathcal{B}_{\tau|0}} \mu(x_{\tau|0}, y)$ for all pairs $(\tau, \pi^\mu) \in TP$. Now, using Theorem \ref{thm:proba} (Equation \eqref{eq:cp_open_}), we can directly conclude that $\text{Prob}(\mu(x_{\tau|0}, Y_{\tau}) \geq 0, \forall (\tau, \pi^\mu)\in TP)\ge 1-\delta$.
    % If $\bar{z}_{\tau|k}^\mu = 1$ for all $\tau \in T$ and $\pi^\mu\in P$, then it follows by equation \eqref{eq:predicate_} that $x_{\tau|k}$ is such that $0< \epsilon\le \text{min}_{y \in \mathcal{B}_{\tau|k}} \mu(x_{\tau|k}, y)$ for all $\tau \in T$ and $\pi^\mu\in P$. Now, using Theorem \ref{thm:proba}, we can directly conclude that $\text{Prob}(\mu(x_{\tau|k}, Y_{\tau}) \geq 0, \forall (\tau, \pi^\mu)\in T\times P)\ge 1-\delta$.
}

\begin{remark}
    We do not need to enforce $\bar{z}_{\tau|k}^\mu = 0$ since the STL formula $\phi$ is in positive normal form as per Assumption \ref{ass:noneg}. However, by using the constraint $\text{max}_{y \in \mathcal{B}_{\tau|k}} \mu(x_{\tau|k}, y) \leq M \bar{z}_{\tau|k}^{\mu} - \epsilon$ we could enforce that $\bar{z}_{\tau|k}^\mu = 0$ implies  $\text{Prob}(\mu(x_{\tau|k}, Y_{\tau}) < 0, \forall (\tau, \pi^\mu)\in TP)\ge 1-\delta$.
\end{remark}

\textbf{Computational considerations. } Note that we minimize $\mu(x_{\tau|k},y)$ over the $y$ component within the ball $\mathcal{B}_{\tau|k}$  in equation \eqref{eq:predicate_}. In an outer loop, we will additionally need to  optimize over $x_{\tau|k}$ to compute control inputs $u_{\tau|k}$. To obtain computational tractability for this, we will now remove the minimum over $y$ and instead write equation \eqref{eq:predicate_} in closed-form. Since the function $\mu$ is continuously differentiable and convex in its parameters by Assumption \ref{ass:noneg}, we can use the KKT conditions of the inner problem to do so. Specifically, the constraints in equation \eqref{eq:predicate_} can be converted into a bilevel optimization problem where the outer problem consists of the constraints
\begin{align}\label{eq:kkt_}
\begin{split}
    %\mu(x_\tau, y^*) & \leq M \bar{z}_{\tau}^{\mu} - \epsilon,  \\
    - \mu(x_{\tau|k}, y^*) & \leq M (1-\bar{z}_{\tau|k}^{\mu}) -\epsilon, 
    \end{split}
\end{align}
and where the inner optimization problem is 
\begin{subequations}\label{eq:cons_y}
    \begin{align}
        & y^*:=\underset{y}{\text{argmin}} \;\mu(x_{\tau|k}, y), \\
        & \text{s.t.} \; 
        ||y_{i} -\hat{Y}_{\tau|k, i}|| \leq C_{\tau|k, i}, \forall i \!\in\! \{1, \dots, N\}. 
    \end{align}
\end{subequations}
We can use the KKT conditions of the inner optimization problem in \eqref{eq:cons_y} after a minor modification in which we change the constraints $||y_i -\hat{Y}_{\tau|k, i}|| \leq C_{\tau|k, i}$ into $(y_i -\hat{Y}_{\tau|k, i})^\top (y_i-\hat{Y}_{\tau|k,i}) \leq C_{\tau|k, i}^2$ to obtain differentiable constraint functions. Formally, we obtain
\begin{subequations}\label{eq:kkt}
    \begin{align}
        & \frac{\partial \mu(x_{\tau|k}, y^*)}{\partial y} + \Sigma_{i=1}^{N} \lambda_i \frac{\partial  (y_i^*-\hat{Y}_{\tau|k,i})^\top (y_i^*-\hat{Y}_{\tau|k,i})}{\partial y_i} = 0  \\
        & (y_i^*-\hat{Y}_{\tau|k,i})^\top (y_i^*-\hat{Y}_{\tau|k,i}) \leq C_{\tau|k, i}^2, \forall i \!\in\! \{1, \dots, N\}  \\
        & \lambda_i \geq 0, \forall i \!\in\! \{1, \dots, N\}  \\
        & \lambda_i ((y_i^*-\hat{Y}_{\tau|k,i})^\top (y_i^*-\hat{Y}_{\tau|k,i}) - C_{\tau|k, i}^2) = 0, \forall i \!\in\! \{1, \dots, N\} 
    \end{align}
\end{subequations}
as the set of KKT conditions for the inner problem in \eqref{eq:cons_y}.

We have that a solutions $y^*$ to \eqref{eq:cons_y} is a feasible solution to the KKT conditions in \eqref{eq:kkt} and vice versa \cite{boyd2004convex}, since Slater's condition holds for the inner convex optimization problem \eqref{eq:cons_y}, with the reasoning that $C_{\tau|k, i} >0$.\footnote{Here, we make the mild assumption that $C_{OL} \neq 0$. Note that $C_{OL}$ is computed from the nonconformity score $R^{(j)}$ in \eqref{eq:nonscore_open} which is only zero if our trajectory predictor is perfect, i.e., if $||Y_{\tau,i}^{(j)}-\hat{Y}_{\tau|0,i}^{(j)}||=0$, which is rarely the case. We could  modify the nonconformity score $R^{(j)}$ by adding a small constant to be guaranteed to achieve $C_{OL} \neq 0$  at the expense of making the prediction region slightly more conservative. }
As a result, we can replace \eqref{eq:predicate_} with equations \eqref{eq:kkt_} and \eqref{eq:kkt} equivalently.

\textbf{Boolean and temporal operators.}
So far, we have  encoded predicates $\pi^\mu$. We can encode Boolean and temporal operators recursively using the standard MIP encoding \cite{raman2014model,raman2015reactive}. In brief, for the conjunction $\phi:=\wedge_{i=1}^m \phi_i$ we introduce the binary variable $\bar{z}_{\tau|k}^{\phi} \!\in\! \{0,1\}$ that is such that $\bar{z}_{\tau|k}^{\phi} =1$ if and only if  $\bar{z}_{\tau|k}^{\phi_1}=\hdots=\bar{z}_{\tau|k}^{\phi_m} =1$. Consequently, $\bar{z}_{\tau|k}^{\phi} =1$ implies that $(\mathbf{S},\tau)\models \phi$ holds with a probability of at least $1-\delta$ at time step $k$. We achieve this by enforcing the constraints
    \begin{align}
        & \bar{z}_{\tau|k}^{\phi} \leq \bar{z}_{\tau|k}^{\phi_i}, i=1,\dots, m, \nonumber \\
        & \bar{z}_{\tau|k}^{\phi} \geq 1-m+\Sigma_{i=1}^m \bar{z}_{\tau|k}^{\phi_i}. \nonumber
    \end{align}
For the disjunction $\phi:=\vee_{i=1}^m \phi_i$, we follow the same idea but instead enforce the constraints
\begin{align}
    & \bar{z}_{\tau|k}^{\phi} \geq \bar{z}_{\tau|k}^{\phi_i}, i=1,\dots, m, \nonumber \\
    & \bar{z}_{\tau|k}^{\phi} \leq \Sigma_{i=1}^m \bar{z}_{\tau|k}^{\phi_i}. \nonumber
\end{align}

%\begin{itemize}
    % \item Negation $\phi=\neg \phi_1$: $\bar{z}_k^{\phi} = 1-\bar{z}_k^{\phi_1}$,
    % \vspace{3pt}
%    \item Conjunction $\phi=\wedge_{i=1}^m \phi_i$ ($\bar{z}_\tau^{\phi} =  \wedge_{i=1}^m \bar{z}_\tau^{\phi_i}$): 
%    \begin{align}
%        & \bar{z}_\tau^{\phi} \leq \bar{z}_\tau^{\phi_i}, i=1,\dots, m, \nonumber \\
%        & \bar{z}_\tau^{\phi} \geq 1-m+\Sigma_{i=1}^m \bar{z}_\tau^{\phi_i}, \nonumber
%    \end{align}
%    \item Disjunction $\phi=\vee_{i=1}^m \phi_i$ ($\bar{z}_\tau^{\phi} =  \vee_{i=1}^m \bar{z}_\tau^{\phi_i}$): 
%    \begin{align}
%        & \bar{z}_\tau^{\phi} \geq \bar{z}_\tau^{\phi_i}, i=1,\dots, m, \nonumber \\
%        & \bar{z}_\tau^{\phi} \leq \Sigma_{i=1}^m \bar{z}_\tau^{\phi_i}. \nonumber
%    \end{align}
%\end{itemize}
%These constraints again enforce that $\bar{z}_\tau^{\phi} =1$ if and only if $\phi$ is satisfied at time $\tau$ with a probability $1-\delta$.
% Let us denote the above inequalities as \texttt{BoolCons-z}.

For temporal operators, we follow a similar procedure, but first note that temporal operators can be written as Boolean combinations of conjunctions and disjunctions. For $\phi:=\mathbf{G}_{[a,b]}\phi_1$, we again introduce a binary variable $\bar{z}_{\tau|k}^{\phi} \!\in\! \{0,1\}$ and encode $\bar{z}_{\tau|k}^{\phi} = \bigwedge_{\tau'=\tau+a}^{\tau+b} \bar{z}_{\tau'|k}^{\phi_1}$ using the MIP constraints for conjunctions introduced before. Similarly, for $\phi:=\mathbf{F}_{[a,b]}\phi_1$ we use that $\bar{z}_{\tau|k}^{\phi} = \bigvee_{\tau'=\tau+a}^{\tau+b} \bar{z}_{\tau'|k}^{\phi_1}$, while for $\phi=\phi_1\mathbf{U}_{[a,b]}\phi_2$ we use that $\bar{z}_{\tau|k}^{\phi} = \bigvee_{\tau'=\tau+a}^{\tau+b} (\bar{z}_{\tau'|k}^{\phi_2} \wedge \bigwedge_{\tau''=\tau}^{\tau'} \bar{z}_{\tau''|k}^{\phi_1})$.

%The encoding of temporal operators follows the same idea by introducing binary variables $\bar{z}_{\tau}^{\phi} \!\in\! \{0,1\}$ for instant $\tau$ as:
%\begin{itemize}
%    \item Always $\phi=\mathbf{G}_{[a,b]}\phi_1$:
%    \begin{align}
%        \bar{z}_\tau^{\phi} = \bigwedge_{\tau'=\tau+a}^{\tau+b} \bar{z}_\tau^{\phi_1}, \nonumber
%    \end{align}
%    \item Eventually $\phi=\mathbf{F}_{[a,b]}\phi_1$:
%    \begin{align}
%        \bar{z}_\tau^{\phi} = \bigvee_{\tau'=\tau+a}^{\tau+b} \bar{z}_\tau^{\phi_1}, \nonumber
%    \end{align}
%    \item Until $\phi=\phi_1\mathbf{U}_{[a,b]}\phi_2$:
%    \begin{align}
%        \bar{z}_\tau^{\phi} = \bigvee_{\tau'=\tau+a}^{\tau+b} (\bar{z}_{\tau'}^{\phi_2} \wedge \bigwedge_{\tau''=\tau}^{\tau'} \bar{z}_{\tau''}^{\phi_1}).\nonumber
%    \end{align}
%\end{itemize}
% Let us denote the above inequalities as \texttt{TemCons-z}.

\textbf{Soundness of the encoding.} At the initial time step $k=0$, the procedure described above provides us with a binary variable $\bar{z}_{0|k}^\phi$ and a set of mixed integer constraints over $x$ such that $\bar{z}_{0|k}^\phi = 1$ implies that $\mathbf{S}\models \phi$ holds with a probability of at least $1-\delta$. We summarize this result next, and we introduce of the result of closed-loop in the next section.

\begin{mythm}[Open-loop qualitative encoding result]\label{theorem_qual}
    Let the conditions from Problem \ref{prob} hold.  If $\bar{z}_{0|0}^\phi = 1$, then we have that $\text{Prob}(\mathbf{S}\models \phi)\ge 1-\delta$.

    % \begin{proof}
    %     By enforcing that $\bar{z}_{0|0}^\phi = 1$, there is an assignment of $0$ and $1$ values to the binary variables $\bar{z}_{\tau|0}^\mu$ for each $\tau\in\{1,\hdots,T_\phi\}$ and for each $\pi^\mu\in \mathcal{P}$.  This assignment is non-unique. Specifically, let $\bar{z}_{\tau|0}^\mu = 1$ for all times $\tau \in T$ and for all predicates $\pi^\mu\in P$  for some sets $(T,P)\subseteq \{1, \dots, T_\phi\}\times \mathcal{P}$, and let $\bar{z}_{\tau|0}^\mu = 0$ otherwise. By Corollary \ref{col:predicate}, it follows that $\text{Prob}(\mu(x_{\tau|0}, Y_{\tau}) \geq 0, \forall (\tau , \pi^\mu)\in T\times P)\ge 1-\delta$. By the recursive definition of $\bar{z}_{0|0}^\phi$ using the encoding from \cite{raman2014model,raman2015reactive} for Boolean and temporal operators and since $\phi$ is in positive normal form,  it follows that $\text{Prob}(\mathbf{S}\models \phi)\ge 1-\delta$.
    % \end{proof}
\end{mythm}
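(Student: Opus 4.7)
The plan is to prove this by structural induction on the STL formula, coupling a pointwise deterministic implication with a single probabilistic bound that comes from Theorem \ref{thm:proba}. Concretely, let $A$ denote the event
\[
    A := \{Y_\tau \in \mathcal{B}_\tau \text{ for all } \tau \in \{1,\ldots,T_\phi\}\},
\]
where $\mathcal{B}_\tau$ is the ball around the open-loop predictions with radii $C_{\tau|0,i} = C_{OL}\sigma_{\tau|0,i}$. By Theorem \ref{thm:proba}, $\text{Prob}(A) \geq 1-\delta$. The entire proof then reduces to establishing the deterministic implication: on event $A$, if $\bar{z}_0^\phi = 1$ then the realized trajectory $\mathbf{S}$ satisfies $\phi$. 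Combining this implication with $\text{Prob}(A) \geq 1-\delta$ immediately yields $\text{Prob}(\mathbf{S} \models \phi) \geq 1-\delta$.

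To establish the deterministic implication, I would prove by induction on the structure of $\phi$ the stronger claim: for every subformula $\psi$ of $\phi$ and every time $\tau \in \{0,\ldots,T_\phi\}$, if $\bar{z}_\tau^\psi = 1$ and event $A$ holds, then $(\mathbf{S},\tau) \models \psi$. The base case is predicates $\pi^\mu$. Here $\bar{z}_\tau^\mu = 1$ forces the Big-M constraint \eqref{eq:predicate_} (equivalently the KKT encoding \eqref{eq:kkt}--\eqref{eq:kkt6}) to reduce to $\min_{y \in \mathcal{B}_\tau} \mu(x_\tau,y) \geq \epsilon > 0$, so $\mu(x_\tau,y) \geq 0$ for every $y \in \mathcal{B}_\tau$. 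On event $A$ we have $Y_\tau \in \mathcal{B}_\tau$, hence $\mu(x_\tau, Y_\tau) \geq 0$, i.e., $(\mathbf{S},\tau) \models \pi^\mu$.

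The inductive step handles Boolean and temporal operators by reusing the standard MIP encoding correctness arguments of \cite{raman2014model,raman2015reactive}. For $\phi = \wedge_i \phi_i$, the constraints force $\bar{z}_\tau^\phi = 1$ to imply $\bar{z}_\tau^{\phi_i} = 1$ for every $i$, so the inductive hypothesis gives $(\mathbf{S},\tau)\models \phi_i$ for each $i$ on event $A$, hence $(\mathbf{S},\tau)\models \phi$. The disjunction case is analogous. Temporal operators $\mathbf{G}_{[a,b]}$, $\mathbf{F}_{[a,b]}$, and $\mathbf{U}_{[a,b]}$ are handled by unrolling them into their finite conjunctive/disjunctive equivalents over the shifted times $\tau' \in [\tau+a,\tau+b]$, as indicated in the text, and then applying the inductive hypothesis at each $\tau'$. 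Setting $\tau = 0$ and $\psi = \phi$ closes the induction.

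The step I expect to be subtle rather than difficult is the quantifier placement: $A$ must hold jointly across all times $\tau$ so that, when the induction simultaneously invokes the predicate conclusion at many $(\tau,\pi^\mu)$ pairs, the single $1-\delta$ bound still applies. This is precisely why Theorem \ref{thm:proba} is stated with a joint prediction region over $\{1,\ldots,T_\phi\}\times\{1,\ldots,N\}$, rather than a per-time bound that would require an additional union bound and shrink the confidence. A clean way to make this explicit is to first collect the (finite) set of predicate-time pairs activated in the induction and invoke Corollary \ref{col:predicate} directly on that set; this mirrors the $k=0$ hypothesis of the theorem and avoids any hidden union-bounding.
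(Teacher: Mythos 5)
Your proof is correct and follows essentially the same route as the paper: a single joint prediction-region event of probability at least $1-\delta$ (Theorem \ref{thm:proba} via Corollary \ref{col:predicate}), combined with the deterministic soundness of the recursive MIP encoding on that event, where positive normal form ensures only the one-sided predicate implication is ever needed. Your version merely makes the structural induction and the event-conditioning explicit where the paper appeals to the standard encoding correctness of \cite{raman2014model,raman2015reactive}.
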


\proof{By enforcing that $\bar{z}_{0|0}^\phi = 1$, there is an assignment of $0$ and $1$ values to the binary variables $\bar{z}_{\tau|0}^\mu$ for each pair $(\tau, \pi^\mu)\in\{1,\hdots,T_\phi\} \times \mathcal{P}$.  
This assignment is non-unique. 
Specifically, let $\bar{z}_{\tau|0}^\mu = 1$ for all pairs $(\tau, \pi^\mu) \in TP$, where $TP \subseteq \{1, \dots, T_\phi\}\times \mathcal{P}$ is the set of some pairs, and let $\bar{z}_{\tau|0}^\mu = 0$ otherwise. 
By Corollary \ref{col:predicate}, it follows that $\text{Prob}(\mu(x_{\tau|0}, Y_{\tau}) \geq 0, \forall (\tau , \pi^\mu)\in TP)\ge 1-\delta$. By the recursive definition of $\bar{z}_{0|0}^\phi$ using the encoding from \cite{raman2014model,raman2015reactive} for Boolean and temporal operators and since $\phi$ is in positive normal form,  it follows that $\text{Prob}(\mathbf{S}\models \phi)\ge 1-\delta$.}

\subsection{Quantitative STL Encoding with Multi-Agent Prediction Regions}\label{subsec:quan}

The qualitative MIP encoding ensures that $\bar{z}_{0|0}^\phi = 1$ implies that $\text{Prob}(\mathbf{S}\models \phi)\ge 1-\delta$ as in Theorem \ref{theorem_qual}. However, in some cases one may want to  optimize over the quantitative semantics $\rho^\phi(\mathbf{S},0)$. We next present a quantitative MIP encoding for $\phi$. Specifically, we will recursively define a continuous variable $\bar{r}^\phi_{0|k}\in \mathbb{R}$ that will be such that $\rho^\phi(\mathbf{S},0) \geq \bar{r}^\phi_{0|k}$ with a probability of at least $1 - \delta$. Our main innovation is a quantitative MIP encoding for predicates, while the Boolean and temporal operators again follow the standard MIP encoding \cite{raman2014model,raman2015reactive}.

%to represent the spacial robustness of $\phi$ at instant $\tau$, i.e., $r^\phi_\tau \coloneqq \rho^\phi (\mathbf{x}, \mathbf{y}, \tau)$, and we also use the $\bar{r}^\phi_\tau$ to represent the probabilistic version of spacial robustness such that $r^\phi_\tau \geq \bar{r}^\phi_\tau$ with the probability $1 - \delta$.

%Based on this worst case of the space robustness for predicates, we can also encode the formula in a quantitative way as \cite{raman2014model}. In this subsection, we give a brief introduction of quantitative encoding for predicates and Boolean operators, and the encoding for temporal operators is the same as the last subsection. 

\textbf{Predicates. } For each predicate $\pi^\mu\in\mathcal{P}$ and for each time $\tau\ge 0$, we introduce a continuous variable $\bar{r}_{\tau|k}^{\mu}\in\mathbb{R}$. Inspired by \cite{lindemann2023conformal}, we define $\bar{r}_{\tau|k}^{\mu}$ as
\begin{align}\label{eq:rho_bar}
    \bar{r}_{\tau|k}^{\mu} \coloneqq 
    \left\{
        \begin{array}{ll}
        \mu(x_\tau, Y_\tau) \ &\text{if} \ \tau \leq k, \\
        \min_{y \in \mathcal{B}_{\tau|k}} \mu(x_{\tau|k}, y) & \text{otherwise}
        \end{array}
    \right. 
\end{align} 
where we again notice that $Y_{\tau}$ is known if $\tau \leq k$, while we compute $\min_{y \in \mathcal{B}_{\tau|k}} \mu(x_{\tau|k}, y)$ if $\tau > k$ to consider the worst case of $\mu(x_{\tau|k}, y)$ for $y$ within the prediction region $\mathcal{B}_{\tau|k}$.

By Theorem \ref{thm:proba}, which ensures that $Y_{\tau} \in \mathcal{B}_{\tau|k}$ holds with a probability of at least $1-{\delta}$, and the construction in equation \eqref{eq:rho_bar}, we have the following straightforward result.

\begin{mycol}\label{col:predicate_quan}
    It holds that $\text{Prob}(\rho^\mu(\mathbf{S},\tau) \geq \bar{r}^\mu_{\tau|0}, \forall (\tau, \pi^\mu)\in\{0, \dots, T_\phi\}\times \mathcal{P})\ge 1-\delta$, where $\mathbf{S}_\tau = [x_{\tau|0}, Y_{\tau}]$.
\end{mycol}
\proof{
The result follows directly from Theorem \ref{thm:proba} (Equation \eqref{eq:cp_open_}) and the construction of $\bar{r}^\mu_{\tau|0}$ in equation \eqref{eq:rho_bar} using the prediction region $\mathcal{B}_{\tau|0}$.
}

Similarly to the qualitative encoding, the term $\min_{y \in \mathcal{B}_{\tau|k}} \mu(x_{\tau|k}, y)$ in equation \eqref{eq:rho_bar} can be written as in equation \eqref{eq:cons_y}, and we can use equations \eqref{eq:kkt} instead of \eqref{eq:cons_y}.

\textbf{Boolean and temporal operators. } As for the qualitative encoding, we only provide a brief summary for the quantitative encoding of Boolean and temporal operators that follow standard encoding rules.
For the conjunction $\phi:=\wedge_{i=1}^m \phi_i$, we introduce the continuous variable $\bar{r}_{\tau|k}^{\phi}\in\mathbb{R}$ that will be such that $\bar{r}_{\tau|k}^{\phi} = \min \{\bar{r}_{\tau|k}^{\phi_1}, \hdots, \bar{r}_{\tau|k}^{\phi_m}\}$. Particularly, we achieve this by enforcing the constraints
\begin{align}
    & \Sigma_{i=1}^m p_{\tau|k}^{\phi_i} = 1, \nonumber \\
    & \bar{r}_{\tau|k}^{\phi} \leq \bar{r}_{\tau|k}^{\phi_i}, i=1,\dots, m, \nonumber \\
    & \bar{r}_{\tau|k}^{\phi_i} - (1-p_{\tau|k}^{\phi_i}) M \leq \bar{r}_{\tau|k}^{\phi} \leq \bar{r}_{\tau|k}^{\phi_i} + M(1-p_{\tau|k}^{\phi_i}), i=1,\dots, m, \nonumber
\end{align}
where $p_{\tau|k}^{\phi_i}\in\{0,1\}$ are $m$ new binary variables so that $\bar{r}_{\tau|k}^{\phi} = r_{\tau|k}^{\phi_i}$ if and only if $p_{\tau|k}^{\phi_i} = 1$. By this encoding, $\bar{r}_{\tau|k}^{\phi} \geq 0$ implies that $(\mathbf{S},\tau)\models \phi$ holds with a probability of at least $1-\delta$.  For the disjunction $\phi:=\vee_{i=1}^m \phi_i$, we follow the same idea but instead enforce the constraints
\begin{align}
    & \Sigma_{i=1}^m p_{\tau|k}^{\phi_i} = 1, \nonumber \\
    & \bar{r}_{\tau|k}^{\phi} \geq \bar{r}_{\tau|k}^{\phi_i}, i=1,\dots, m, \nonumber \\
    & \bar{r}_{\tau|k}^{\phi_i} - (1-p_{\tau|k}^{\phi_i}) M \leq \bar{r}_{\tau|k}^{\phi} \leq \bar{r}_{\tau|k}^{\phi_i} + M(1-p_{\tau|k}^{\phi_i}), i=1,\dots, m. \nonumber
\end{align}
For temporal operators, we again note that we can write each operator as Boolean combinations of conjunctions and disjunctions and then follow the same procedure as for the qualitative encoding.

\textbf{Soundness of the encoding. } At the initial time step $k=0$, the procedure described above gives us  a continuous variable $\bar{r}_{0|k}^\phi$ and a set of mixed integer constraints over $x$ such that $\bar{r}_{0|k}^\phi$ is a probabilistic lower bound of $\rho^\phi(\mathbf{S},0)$. We summarize this result next, and provide the closed-loop result in the next section.

\begin{mythm}[Open-loop quantitative encoding result]\label{theorem_quan}
    Let the conditions from Problem \ref{prob} hold. It holds that $\text{Prob}(\rho^\phi(\mathbf{S},0)\ge \bar{r}_{0|0}^\phi)\ge 1-\delta$. If $\bar{r}_{0|0}^\phi > 0$, then we have that $\text{Prob}(\mathbf{S}\models \phi)\ge 1-\delta$.
\end{mythm}

\proof{
    Recall that  $\text{Prob}(\rho^\mu(\mathbf{S},\tau) \geq \bar{r}^\mu_{\tau|0}, \forall (\tau,\pi^\mu)\in \{0, \dots, T_\phi\}\times \mathcal{P})\ge 1-\delta$ by Corollary \ref{col:predicate_quan}. Since the task $\phi$ is in positive normal form, which excludes negations, and due to the recursive structure of $\phi$ we can directly conclude that $\text{Prob}(\rho^\phi(\mathbf{S},0)\ge \bar{r}_{0|0}^\phi)\ge 1-\delta$. 
    If $\bar{r}_{0|0}^\phi > 0$, we can obviously conclude that $\text{Prob}(\mathbf{S}\models \phi)\ge 1-\delta$.
}

We emphasize that the previous result will allow us to directly optimize over the variable $\bar{r}_{0|0}^\phi$ to achieve $\bar{r}_{0|0}^\phi>0$, which then implies that $\text{Prob}(\mathbf{S}\models\phi)\ge 1-\delta$.

We note that the quantitative encoding allows for robustness considerations and provides a more fine-grained analysis. The qualitative encoding, on the other hand, is computationally more tractable.

\subsection{Predictive MIP-Based Control Synthesis}\label{subsec:loop}

For control synthesis, we can now use the qualitative encoding via the binary variable $\bar{z}_{0|k}^{\phi}$ or the quantitative encoding via the continuous variable $\bar{r}_{0|k}^\phi$ to solve Problem \ref{prob}.  Formally,  we then solve the following qualitative mixed integer optimization problem
\begin{subequations}\label{eq:opt}
	\begin{align}
		& \mathbf{u}^*_{k:T_\phi-1}:=\underset{\mathbf{u}_{k:T_\phi-1}}{\text{argmin}} & & J(x_k, \mathbf{u}_{k:T_\phi-1}) \\
		& \text{subject to} & & \nonumber \\
		& &&
		\!\!\!\!\!\!\!\!\!\!\!\!\!\!\!\!\!\!\!\!\!\!\!
		\mathbf{u}_{k:T_\phi-1} \in \mathcal{U}^{T_\phi-k}, \mathbf{x}_{k+1:T_\phi} \in \mathcal{X}^{T_\phi-k} \\
		& &&
		\!\!\!\!\!\!\!\!\!\!\!\!\!\!\!\!\!\!\!\!\!\!\!
		x_{\tau+1} = f(x_\tau, u_\tau), \tau=k, \dots, T_\phi-1, \\
		& &&
		\!\!\!\!\!\!\!\!\!\!\!\!\!\!\!\!\!\!\!\!\!\!\!
		\bar{z}_{0|k}^{\phi} = 1.  \label{eq:opt-stl-k-d}
	\end{align} 
\end{subequations}
Similarly, the quantitative mixed integer optimization problem can be formulated by replacing $\bar{z}_{0|k}^{\phi} = 1$ with $\bar{r}_{0|k}^{\phi} > 0$ in equation \eqref{eq:opt-stl-k-d}.

Note particularly that we enforce either $\bar{z}_{0|k}^{\phi} = 1$ or $\bar{r}_{0|k}^{\phi} > 0$ in equation \eqref{eq:opt-stl-k-d}. This enables us to select between the quantitative and the qualitative encoding.
We can also enforce $\bar{r}_{0|k}^{\phi} > a$, where $a>0$, to obtain a trajectory with higher robustness.

\textbf{Open-loop synthesis.} We will start with analyzing the case when $k=0$ where we get an open-loop control sequence $\mathbf{u}^*_{0:T_\phi-1}$. By Theorems \ref{theorem_qual} and \ref{theorem_quan} it is easy to see that we solve Problem \ref{prob} if we can find a feasible solution to the optimization problem in~\eqref{eq:opt}.  We next summarize the result that solves Problem \ref{prob}.

\begin{mythm}[Open-loop control]\label{thm:open}
    Let the conditions from Problem \ref{prob} hold. If the optimization problem in \eqref{eq:opt} is feasible at time $k=0$, then applying the open-loop control sequence $\mathbf{u}^*_{0:T_\phi-1}$ to \eqref{eq:sys} results in
    \begin{align}
        \text{Prob}(\mathbf{S} \models \phi) \geq 1-\delta. \nonumber
    \end{align}
\end{mythm}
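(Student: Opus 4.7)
The plan is to show that feasibility of \eqref{eq:opt} at $k=0$ directly triggers one of the two soundness results already proven in Theorems \ref{theorem_qual} and \ref{theorem_quan}, and then leverage standard soundness of the STL quantitative semantics (namely, $\rho^\phi(\mathbf{s},0) > 0 \Rightarrow \mathbf{s}\models \phi$) to close the gap in the quantitative case. Since the optimization is solved at $k=0$, the mixed-integer encoding is instantiated with the open-loop prediction regions $C_{\tau|0,i} = C_{OL}\sigma_{\tau|0,i}$ from \eqref{eq:C_open}, so the precondition ``$k=0$'' of Corollaries \ref{col:predicate} and \ref{col:predicate_} is automatically met.

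First, I would unpack what feasibility of \eqref{eq:opt} at $k=0$ buys us. A feasible solution yields a control sequence $\mathbf{u}^*_{0:T_\phi-1}\in \mathcal{U}^{T_\phi}$ and an associated predicted state trajectory $\mathbf{x}_{1:T_\phi}\in \mathcal{X}^{T_\phi}$ obtained from $x_{\tau+1}=f(x_\tau,u_\tau^*)$, together with assignments of all auxiliary binary/continuous variables $\{\bar{z}_\tau^\mu\}, \{\bar{r}_\tau^\mu\}, \{\bar{z}_\tau^\phi\}, \{\bar{r}_\tau^\phi\}$ satisfying the MIP constraints derived in Section \ref{subsec:qua} (respectively Section \ref{subsec:quan}). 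Because the autonomous dynamics in \eqref{eq:sys} are deterministic and $x_0$ is known, applying $\mathbf{u}^*_{0:T_\phi-1}$ reproduces exactly this $x$-trajectory. Hence the state component of $\mathbf{S}$ coincides with the $x$ entering the MIP constraints, while the $Y$ component remains random with distribution $\mathcal{D}$.

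Next I would split on the disjunction in \eqref{eq:opt-stl-k-d}. In the qualitative case $\bar{z}_0^\phi = 1$, Theorem \ref{theorem_qual} directly gives $\text{Prob}(\mathbf{S}\models \phi)\ge 1-\delta$, and we are done. In the quantitative case $\bar{r}_0^\phi > 0$, Theorem \ref{theorem_quan} yields $\text{Prob}(\rho^\phi(\mathbf{S},0)\ge \bar{r}_0^\phi)\ge 1-\delta$. Since $\bar{r}_0^\phi > 0$, the event $\{\rho^\phi(\mathbf{S},0)\ge \bar{r}_0^\phi\}$ is contained in the event $\{\rho^\phi(\mathbf{S},0) > 0\}$, which by soundness of the STL quantitative semantics (see \cite{fainekos2009robustness}) is contained in $\{\mathbf{S}\models\phi\}$. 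Monotonicity of probability then gives $\text{Prob}(\mathbf{S}\models\phi)\ge 1-\delta$.

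I do not anticipate a real obstacle: the heavy lifting (validity of the prediction regions via Theorem \ref{thm:proba}, and the recursive translation of feasibility into probabilistic STL satisfaction via Corollaries \ref{col:predicate}--\ref{col:predicate_} and Theorems \ref{theorem_qual}--\ref{theorem_quan}) is already done. The only subtle point worth a sentence in the write-up is that because the theorem is stated at $k=0$, the probabilistic guarantee uses the \emph{open-loop} prediction regions $\mathcal{B}_\tau$ formed from $C_{\tau|0,i}$, for which \eqref{eq:cp_open_} holds jointly over all $\tau\in\{1,\dots,T_\phi\}$ and $i\in\{1,\dots,N\}$; this joint validity is exactly what allows us to union the satisfaction of all predicate-level events into the single bound $1-\delta$ without any additional union-bounding, which would otherwise be the main pitfall.
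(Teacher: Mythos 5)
Your proposal is correct and follows essentially the same route as the paper's own proof: split on the disjunction in \eqref{eq:opt-stl-k-d}, invoke Theorem \ref{theorem_qual} in the qualitative case, and invoke Theorem \ref{theorem_quan} together with $\bar{r}_0^\phi>0$ and soundness of the quantitative semantics in the quantitative case. The extra details you supply (determinism of \eqref{eq:sys} pinning down the $x$-trajectory, and the open-loop regions being the ones instantiated at $k=0$) are left implicit in the paper but do not change the argument.
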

\proof{
Note that we enforce that either $\bar{z}_{0|0}^{\phi} = 1$ or $\bar{r}_{0|0}^{\phi} > 0$ within the optimization problem in \eqref{eq:opt}. By Theorems \ref{theorem_qual} and \ref{theorem_quan} this implies that $\text{Prob}(\mathbf{S}\models \phi)\ge 1-\delta$ or  $\text{Prob}(\rho^\phi(\mathbf{S},0)\ge \bar{r}_{0|0}^\phi)\ge 1-\delta$, respectively. 
}

The complexity of the optimization problem in \eqref{eq:opt} for the open-loop is in general that of a mixed integer program which is NP-hard. 
Compared to a standard solution to deterministic STL control problems \cite{raman2014model}, our method introduces additional nonconvexity to the optimization framework. While we maintain identical constraints for Boolean and temporal operators, our formulation extends these constraints by incorporating KKT conditions for each predicate at every time step, which introduces the nonconvex constraint (\ref{eq:kkt}a) and complementary slackness condition (\ref{eq:kkt}d).

\section{Closed-Loop Predictive STL Control Synthesis}\label{sec:solution_closed}
The above open-loop control strategy may be conservative as the prediction regions $C_{\tau|0,i}$ will usually be conservative for large  $\tau$ as then the predictions $\hat{Y}_{\tau|0}$ lose accuracy. Furthermore, due to the lack of state feedback, the open-loop controller is not robust. Therefore, we propose to solve the optimization problem in \eqref{eq:opt} in a receding horizon fashion as in \cite{lindemann2023safe}. 

An intuitive solution to this problem is the MPC framework as follows.  At each time $k$ we observe the realization of $Y_{k}$, update our predictions $\hat{Y}_{\tau|k}$, and compute the control sequence $\mathbf{u}^*_{k:T_\phi-1}$ by solving the optimization problem in \eqref{eq:opt}.
Then, we apply the first element $u^*_{k}$ of $\mathbf{u}^*_{k:T_\phi-1}$ to the system in \eqref{eq:sys} before we repeat the process. 
However, since the STL task is defined over the whole trajectory $\mathbf{S}$ (instead of state constraints $c(x_k, Y_k) \geq 0$ in \cite{lindemann2023safe}), we cannot apply this strategy and get the same result as in \cite{lindemann2023safe}. This poses a new challenge that we address here.
Specifically, \cite{lindemann2023safe} show that 
$\text{Prob}(c(x_k, Y_k) \ge 0, \forall k \in \{1, \dots, T\}) \ge 1 - \delta$, where $T$ is the length of the task,
if the underlying optimization problem is feasible at all time steps. 
However, this reasoning cannot be applied to the STL case for the following reason. 
Our optimization problem \eqref{eq:opt} depends on both past realizations and future predictions, as opposed to \cite{lindemann2023safe}, 
where the optimization problem only depends on future predictions, but not past realizations.
Indeed, past realizations affect the satisfaction of the STL specification and can hence break the feasibility of the optimization problem.
The authors in \cite{stamouli2024recursively} build upon \cite{lindemann2023safe} and present a recursively feasibile shrinking horizon MPC that guarantees probabilistic satisfaction of state constraints.
However, this strategy cannot be applied to the STL case for the same reasons since feasibility is affected by past realizations.

We propose a closed-loop control framework for STL tasks that is recursively feasibility with high probability.
We use the qualitative and quantitative encodings presented in Sections \ref{subsec:qua} and \ref{subsec:quan}, respectively, and the optimization problem \eqref{eq:opt} presented in Section 4.3 where we modify the constraints imposed for predicates, while we keep the constraints for Boolean and temporal operators without modification. In order to change the constraints for predicates, we need new prediction regions which we present in Section \ref{subsec:closed_cp} before we discuss the qualitative and quantitative encodings in Section \ref{subsec:closed_stl} as well as the closed-loop controller in Section \ref{subsec:closed_control}.

\subsection{Closed-Loop Prediction Regions}\label{subsec:closed_cp}
We now apply conformal prediction to  capture the union of past prediction regions, which will allow us to design a shrinking-horizon MPC with probabilistic recursive feasibility guarantees.
Specifically, we are interested in constructing prediction regions for all prediction times $\tau \in \{1, \dots, T_\phi\}$ made at previous  times $s \in \{0, \dots, \tau-1\}$ for all agents $i \in \{1, \dots, N\}$, i.e., such that
\begin{align}\label{eq:cp_closed}
    \text{Prob}(||Y_{\tau,i} - \hat{Y}_{\tau|s,i}|| \leq C_{\tau|s, i}, \forall (\tau,s,i) \!\in\! \{1, \dots, T_{\phi}\} \times  \{0, \dots, \tau-1\} \times\{1, \dots, N\}) \!\geq\! 1-\delta,  
\end{align}
where $C_{\tau|s,i}$ indicates the prediction error of agent $i$ for predictions $\hat{Y}_{\tau|s,i}$ made at time $s$.
Intuitively, this guarantees that the probability of $Y_\tau$ being in the previous prediction regions described by $C_{\tau|s, i}$ for all previous times $s \in \{0, \dots, \tau-1\}$ and agents $i\in\{1,\hdots,N\}$ is larger than or equal to $1-\delta$, 
which will allow us to formulate constraints that guarantee feasibility of the optimization problem \eqref{eq:opt} (more details follow).

To obtain the prediction region in \eqref{eq:cp_closed}, we follow a similar idea to the open-loop case, but instead define the normalized nonconformity score 
\begin{align}\label{eq:nonscore_closed_}  
    & R_{CL}^{(j)}  \!\coloneqq\! \max_{\substack{(\tau, s, i) \in \{1,\dots, T_\phi\} \\ \times \{0, \dots, \tau-1\} \times \{1,\dots, N\}}}  \frac{||Y_{\tau,i}\!-\!\hat{Y}_{\tau|s, i}||}{\sigma_{\tau|s, i}} 
\end{align}
for calibration trajectories $Y^{(j)}\in D_{cal}$  where $\sigma_{\tau|s, i}>0$ are again constants that normalize the prediction error to  $[0,1]$ for all prediction times $\tau$, past times $s$ and agents $i$. In this case, we let $\sigma_{\tau|s, i}:=\max_{j} \|Y^{(j)}_{\tau,i} - \hat{Y}^{(j)}_{\tau|s,i}\|$ for training trajectories  $Y^{(j)}\in D_{train}$, again assuming that $\sigma_{\tau|s, i}$ is non-zero. 
Then, we have the following theorem.

\begin{mythm}\label{thm:prob_close}
    Given the random trajectory $Y\coloneqq (Y_0, Y_1, \dots) \sim \mathcal{D}$, a set of trajectories $D_{cal}$ and $D_{train}$ that satisfy Assumption \ref{ass:data}, and the failure probability $\delta \!\in\! (0,1)$, then the prediction region in equation \eqref{eq:cp_closed} holds for the choice of 
    \begin{align}\label{eq:C_closed_}
        C_{\tau|s,i} \coloneqq C_{CL} \sigma_{\tau|s, i} 
    \end{align}
    where $\sigma_{\tau|s, i}$ is a positive constant and
    \begin{align*}
        C_{CL} \coloneqq \text{Quantile}_{1-\delta}(R_{CL}^{(1)},\hdots,R_{CL}^{(K)}, \infty).
    \end{align*} 
\end{mythm}
The proof follows exactly the same steps as Theorem \ref{thm:proba} and is omitted.  Similarly, we can compute all $C_{\tau|s,i}$ in an offline manner by modifying Algorithm \ref{alg:offline}.

\subsection{Closed-Loop STL Encoding}\label{subsec:closed_stl}
Future predictions will affect recursive feasibility since the prediction regions made at time $k+1$ may be different from that made at time $k$, which makes the optimization problems \eqref{eq:opt} at time $k+1$ different from that at time $k$.
To account for this, inspired by \cite{stamouli2024recursively}, we gradually relax the constraints as now prediction regions become available online. Following this idea, we introduce qualitative and quantitative encodings in the remainder.

\textbf{Qualitative encoding.} We replace the quantitative encoding of predicates in equation \eqref{eq:predicate_} by 
\begin{align}\label{eq:predicate_closed}
    - \max_{0 \le s \le k}{\min_{y \in \mathcal{B}_{\tau|s}} \mu(x_{\tau|k}, y)}  \leq M (1-\bar{z}_{\tau|k}^{\mu}) -\epsilon,
\end{align}
where $\mathcal{B}_{\tau|s}$ is defined as before in equation \eqref{eq:B_tau}, but now with $C_{\tau|s,i}$ as in \eqref{eq:C_closed_}.
We use $s$ to denote the past time steps, and recall that $k$ is the current time and $\tau$ is the prediction time.
Equation \eqref{eq:predicate_closed} enforces $-\min_{y \in \mathcal{B}_{\tau|s}} \mu(x_{\tau|k}, y) \leq M (1-\bar{z}_{\tau|k}^{\mu}) -\epsilon$ to be satisfied for at least one of the prediction regions $\mathcal{B}_{\tau|s}, 0 \le s \le k$, instead of the current prediction region $\mathcal{B}_{\tau|k}$.
Intuitively, the max operation in \eqref{eq:predicate_closed} allows us to pick the least conservative prediction region to encode the predicate function, which relaxes the constraints gradually. 
Figure \ref{fig:illustration} illustrates the aforementioned idea by an example. Specifically, the predicate describes the task of two agents being close with distance of at most two, i.e., $\mu(x,y) = 2-||x-y||$.
The two grey areas indicate the feasible areas at different times, i.e., for the constraints $\min_{y \in \mathcal{B}_{2|0}} \mu(x, y) \ge 0$ and $\min_{y \in \mathcal{B}_{2|1}} \mu(x, y) \ge 0$. 
The maximum operator in equation \eqref{eq:predicate_closed} allows to choose the union of these areas as the feasible set at time step one.

\begin{figure}[h]
    \centering
    \begin{tikzpicture}[scale=1]
        \def\unit{1cm} 

        \fill[black!10] (0,0) circle (0.7*\unit);   
        \fill[black!10] (\unit,0) circle (0.5*\unit); 

        % first circle
        \draw[blue] (0,0) circle (\unit);  
        \draw[dashed] (0,0) circle (0.7*\unit); 
        
        % second circle
        \draw[dashed] (\unit,0) circle (0.5*\unit);  
        \draw[blue] (\unit,0) circle (1.2*\unit); 

        % notes for blue circles
        \draw[black] (-0.707*\unit, 0.707*\unit) -- (-1.2*\unit,0.9*\unit);
        \node at (-1.6*\unit,0.7*\unit) [above] {$\mathcal{B}_{2|1}$};
        \draw[black] (1.98*\unit, 0.7*\unit) -- (2.5*\unit,0.9*\unit);
        \node at (2.9*\unit,0.7*\unit) [above] {$\mathcal{B}_{2|0}$};
        
        % notes for grey circles
        \draw[black] (-0.3*\unit, -0.4*\unit) -- (-1.2*\unit,-0.9*\unit);
        \node at (-2.2*\unit,-1.4*\unit) [above] {\tiny $\min_{y \in \mathcal{B}_{2|1}} \mu (x_{2|1}, y) \ge 0$};
        \draw[black] (1*\unit, -0.3*\unit) -- (2.2*\unit,-0.9*\unit);
        \node at (3.2*\unit,-1.4*\unit) [above] {\tiny $\min_{y \in \mathcal{B}_{2|0}} \mu (x_{2|1}, y) \ge 0$};

    \end{tikzpicture}
    \caption{Illustration of the idea of gradual constraint relaxation.}
    \label{fig:illustration}
\end{figure}
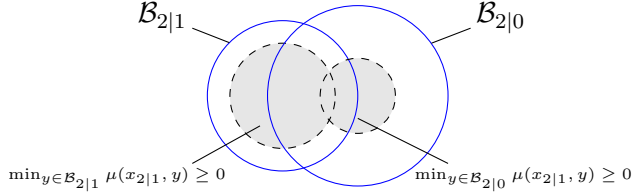

\textbf{Computational considerations. }
To obtain computational tractability for \eqref{eq:predicate_closed}, we introduce a binary variable for each $s$ as $q_{\tau|k}^{\mu, s}$ to indicate whether the predicate is satisfied under $\mathcal{B}_{\tau|s}$, i.e., $q_{\tau|k}^{\mu, s} = 1$ iff $- \min_{y \in \mathcal{B}_{\tau|s}} \mu(x_{\tau|k}, y) \leq M (1-\bar{z}_{\tau|k}^{\mu}) -\epsilon$.
Specifically, we have the following constraints for each $s, 0 \le s \le k$:
\begin{subequations}\label{eq:closed_reformulated}
    \begin{align}
        \!\!\! - \!\!\! \min_{y \in \mathcal{B}_{\tau|s}} \!\!\! \mu(x_{\tau|k}, y)  \!-\! M(1 \!-\! \bar{z}_{\tau|k}^{\mu}) \!+\! \epsilon & \!\leq\! M' (1 \! - \! q_{\tau|k}^{\mu, s})  \! - \! \epsilon' \! , \!\!\!\! \label{eq:closed_reformulated1}  \\
        \!\!\!\! \min_{y \in \mathcal{B}_{\tau|s}} \!\!\! \mu(x_{\tau|k}, y) \!+\! M(1 \! - \! \bar{z}_{\tau|k}^{\mu}) \!-\! \epsilon  & \!\leq\! M' q_{\tau|k}^{\mu, s}  \! - \! \epsilon', \label{eq:closed_reformulated2} 
    \end{align} 
\end{subequations}
where $M'$ and $\epsilon'$ are sufficiently large and sufficiently small positive constants similar to $M$ and $\epsilon$. For simplicity, we can choose $M'=M$ and $\epsilon' = \epsilon$.
Then, instead of enforcing \eqref{eq:predicate_closed}, we enforce \eqref{eq:closed_reformulated} together with 
\begin{align}
    \sum_{s=0}^{k} q_{\tau|k}^{\mu, s} & \ge 1. \label{eq:closed_reformulated3}
\end{align}
By enforcing \eqref{eq:closed_reformulated3}, it holds that $q_{\tau|k}^{\mu, s}=1$ for at least one $s \in \{0, \dots, k\}$, so that $- \min_{y \in \mathcal{B}_{\tau|s}}$ $\mu(x_{\tau|k}, y) \leq M (1-\bar{z}_{\tau|k}^{\mu})  -\epsilon$ holds for at least one of $\mathcal{B}_{\tau|s}$, which implies that \eqref{eq:predicate_closed} holds. In the opposite direction, if $- \min_{y \in \mathcal{B}_{\tau|s}}$ $\mu(x_{\tau|k}, y) \leq M (1-\bar{z}_{\tau|k}^{\mu})  -\epsilon$ holds under $\mathcal{B}_{\tau|s}$, then $q_{\tau|k}^{\mu, s} = 1$ and \eqref{eq:closed_reformulated3} holds. 
As a result, we can use equations \eqref{eq:closed_reformulated} and \eqref{eq:closed_reformulated3} to replace \eqref{eq:predicate_closed}.

\textbf{Summary of qualitative encoding.} To encode $\bar{z}_{0|k}^{\phi}$ in the optimization problem \eqref{eq:opt}, we encode predicates by equations \eqref{eq:closed_reformulated} and \eqref{eq:closed_reformulated3} instead of equation \eqref{eq:predicate_}.
Additionally, we encode Boolean and temporal operators in the same way as in the open-loop case in Section \ref{subsec:qua}.

\textbf{Quantitative encoding.}
Similarly, we encode predicates by gradually relaxed constraints for the quantitative encoding.
Specifically, we replace the quantitative encoding of predicates in equation \eqref{eq:rho_bar} by
\begin{align}\label{eq:rho_bar_close}
    \!\!\!\!\! \bar{r}_{\tau|k}^{\mu} \!\coloneqq \!
    \left\{
        \begin{array}{ll}
        \mu(x_\tau, Y_\tau) \ &\text{if} \ \tau \leq k, \\
        \max_{0 \le s \le k} \min_{y \in \mathcal{B}_{\tau|s}} \mu(x_{\tau|k}, y) & \text{otherwise.}
        \end{array}
    \right. 
\end{align} 
Intuitively, instead of considering the worst case of $\mu(x_{\tau|k}, y)$ for $y$ within the prediction region $\mathcal{B}_{\tau|k}$, we choose the least conservative prediction region from $\mathcal{B}_{\tau|s}, 0 \le s \le k$ to encode the predicate by the maximum operator.

\textbf{Computational considerations. }
To obtain computational tractability for the maximum operator when $\tau > k$ in equation \eqref{eq:rho_bar_close}, 
we apply a similar idea as for the Boolean encoding of the disjunction operator by introducing the binary variable $q_{\tau|k}^{\mu, s}$ to indicate which value is the largest. 
Specifically, we use the following constraints to replace \eqref{eq:rho_bar_close} in the case that $\tau > k$:
\begin{subequations}\label{eq:quan_reformulated}
    \begin{align}
        & \!\!\!\Sigma_{s=0}^k q_{\tau|k}^{\mu, s} = 1, \\
        & \!\!\!\bar{r}_{\tau|k}^{\mu} \geq \min_{y \in \mathcal{B}_{\tau|s}} \mu(x_{\tau|k}, y), s=0,\dots, k, \\
        & \!\!\! \bar{r}_{\tau|k}^{\phi} \leq \min_{y \in \mathcal{B}_{\tau|s}} \mu(x_{\tau|k}, y) \!+\! M(1 \!-\! q_{\tau|k}^{\mu, s}), s \!=\! 0,\dots, k, \!\! \\
        & \!\!\! \bar{r}_{\tau|k}^{\phi} \geq \min_{y \in \mathcal{B}_{\tau|s}} \mu(x_{\tau|k}, y) \!-\! (1 \!-\! q_{\tau|k}^{\mu, s}) M, s \!=\! 0,\dots, k. \!\!
    \end{align}
\end{subequations}
Consequently, $q_{\tau|k}^{\mu, s} = 1$ iff $\bar{r}_{\tau|k}^{\mu} = \min_{y \in \mathcal{B}_{\tau|s}} \mu(x_{\tau|k}, y)$.

\textbf{Summary of quantitative encoding.}  To encode $\bar{r}_{0|k}^{\phi}$ in the optimization problem \eqref{eq:opt}, we encode predicates by equations \eqref{eq:rho_bar_close} and \eqref{eq:quan_reformulated} instead of equation \eqref{eq:rho_bar}.
We again encode Boolean and temporal operators in the same way as in the closed-loop case in Section \ref{subsec:quan}.

The complexity of the optimization problem in \eqref{eq:opt} for the closed-loop is also in general that of a mixed integer program which is NP-hard. 
Compared to a standard solution to deterministic STL control problems \cite{raman2014model}, our closed-loop optimization formulation at time $k$ introduces more constraints due to the explicit consideration of all previously predicted regions. Specifically, this approach generates approximately $k+1$ times more nonconvex constraints than that we discussed at the end of Section \ref{sec:solution_open}.

\subsection{Closed-Loop Predictive MIP-Based Control Synthesis}\label{subsec:closed_control}

In this subsection, we present our closed-loop control framework.
At each time $k$, we observe the realization of $Y_{k}$, compute the control sequence $\mathbf{u}^*_{k:T_\phi-1} \coloneqq u_{k|k}^*,\dots, u_{T_\phi-1|k}^*$ based on the previous prediction regions $\mathcal{B}_{\tau|s}, 0 \le s \le k$, by solving the optimization problem in \eqref{eq:opt}, where $\bar{z}_{0|k}^{\phi}$ and $\bar{r}_{0|k}^{\phi}$ are encoded as we described in the last subsection.
Then, we apply the first element $u^*_{k}$ of $\mathbf{u}^*_{k:T_\phi-1}$ to the system in \eqref{eq:sys} before we repeat the process. Naturally, we do so in a shrinking horizon manner where the prediction horizon decreases by one at each time. 

\begin{mythm}[Closed-loop control]\label{thm:close1}
    Let the conditions from Problem \ref{prob} hold. Suppose the optimization problem in \eqref{eq:opt} is feasible at the initial time $k=0$, where the predicates are encoded as in \eqref{eq:closed_reformulated} and \eqref{eq:closed_reformulated3} (qualitative encoding) or \eqref{eq:quan_reformulated} (quantitative encoding). 
    Then, the probability of \eqref{eq:opt} being feasible at every time step $k \in\{ 1, \dots, T_\phi-1\}$ is larger than or equal to $1-\delta$.
    Furthermore, applying $u_{k|k}^*$ results in 
    $\text{Prob}(\mathbf{S} \models \phi) \geq 1-\delta.$
\end{mythm}

\proof{
    For brevity, we only show the qualitative case, while the proof for the quantitative case follows exactly the same steps.
    At time $k$, assume that  we have obtained the optimal solution $\mathbf{u}^*_{k:T_\phi-1} \coloneqq u_{k|k}^*,\dots, u_{T_\phi-1|k}^*$ for the optimization problem \eqref{eq:opt} such that $\bar{z}_{0|k}^\phi = 1$.
    This means that there is an assignment of $0$ and $1$ values of the binary variables $\bar{z}_{\tau|k}^\mu$ for each pair $(\tau, \pi^\mu)\in\{0,\hdots,T_\phi\} \times \mathcal{P}$. We note that  this assignment is potentially non-unique. 
    Specifically, let $\bar{z}_{\tau|k}^\mu = 1$ for all pairs $(\tau, \pi^\mu) \in TP_k$, 
    where $TP_k \subseteq \{1, \dots, T_\phi\}\times \mathcal{P}$, and let $\bar{z}_{\tau|k}^\mu = 0$ otherwise. 
    For each pair $(\tau, \pi^\mu) \in TP_k$ such that $\bar{z}_{\tau|k}^\mu = 1$, we further know by equations \eqref{eq:closed_reformulated} and \eqref{eq:closed_reformulated3} that there exists at least one $s \in \{0, \dots, k\}$ such that $q_{\tau|k}^{\mu, s} = 1$.
    If it holds that $Y_{k+1} \in \mathcal{B}_{k+1|s}$ for all $s \in \{0, \dots, k\}$, which holds with probability $1-\delta$ due to Theorem \ref{thm:prob_close}, then we have a feasible solution $\mathbf{u}_{k+1} \coloneqq u_{k+1|k}^*,\dots, u_{T_\phi-1|k}^*$ of the optimization problem \eqref{eq:opt} at the next time step $k+1$. This holds for the following reasons:
    \begin{itemize}
        \item For $\tau \leq k$, we have $\bar{z}_{\tau|k+1}^\mu = \bar{z}_{\tau|k}^\mu = 1$ for all $(\tau, \pi^\mu) \in TP_k$ since the states $x_\tau$ and $Y_\tau$ were known already; \vspace{-6pt}
        \item For $\tau = k+1$, we have $\bar{z}_{\tau|k+1}^\mu = \bar{z}_{\tau|k}^\mu = 1$ for all $(\tau, \pi^\mu) \in TP_k$ due to the minimum operation in \eqref{eq:closed_reformulated} if $Y_{k+1} \in \mathcal{B}_{k+1|s}$ for all $s \in \{0, \dots, k\}$, which again holds with probability $1-\delta$ due to Theorem \ref{thm:prob_close}; \vspace{-6pt}
        \item For $\tau \geq k+2$, we have $\bar{z}_{\tau|k+1}^\mu = \bar{z}_{\tau|k}^\mu = 1$ for all $(\tau, \pi^\mu) \in TP_k$ since 
        \eqref{eq:closed_reformulated} and \eqref{eq:closed_reformulated3} hold at time step $k+1$ for $x_{\tau|k+1} = x_{\tau|k}$ (obtained by applying the controller $\mathbf{u}_{k+1}$) and for $q_{\tau|k+1}^{\mu, s} = q_{\tau|k}^{\mu, s}$.
    \end{itemize}
    
    Using the fact that $\phi$ is in  positive normal form, these observations imply that $\mathbf{u}_{k+1}$ is a feasible solution of the optimization problem \eqref{eq:opt} with $\bar{z}_{0|k+1}^\phi = 1$ at time $k+1$  if $\bar{z}_{0|k}^\phi = 1$ and if  $Y_{k+1} \in \mathcal{B}_{k+1|s}$ for all $s \in \{0, \dots, k\}$, which holds with probability $1-\delta$ due to Theorem \ref{thm:prob_close}.

    We can repeat the previous reasoning for all $T_\phi$ time steps, i.e., we can ensure recursive feasibility at each time  $k\in\{1, \dots, T_\phi-1\}$ if it holds that  $Y_{k} \in \mathcal{B}_{k|s}$ for all $k \in \{1, \dots, T_\phi-1\}$ and $s \in \{0, \dots, k-1\}$. Due Theorem \ref{thm:prob_close}, we thus know that the probability of \eqref{eq:opt} being feasible at every time step $k \in\{ 1, \dots, T_\phi-1\}$ is larger than or equal to $1-\delta$, which completes the first part of the proof.

    At time step $k = T_\phi-1$, $\bar{z}_{0|k}^\phi = 1$ implies $\bar{z}_{0|k+1}^\phi = 1$ with probability $1-\delta$ due to the recursive feasibility as above. This means that the trajectory $s_0 s_1 \dots s_{T_\phi}$ will satisfy $\phi$ with probability $1-\delta$, which completes the second part of the proof - probabilistic task satisfaction.
    % \hfill $\blacksquare$
}

The probabilistic nature of recursive feasibility in our framework arises from the uncertainty in predicting the state $Y_{k+1}$ of uncontrollable agents at time $k$. Since $Y_{k+1}$ is uncontrollable and unbounded, we cannot provide deterministic guarantees. Instead, we rely on probabilistic statements to ensure that recursive feasibility holds with high probability. Recent literature \cite{fiacchini2024measured} has also adopted probabilistic statements to describe recursive feasibility in the presence of unbounded stochastic uncertainty. This probabilistic guarantee is particularly relevant in practical applications where the behavior of uncontrollable agents (e.g., pedestrians, other vehicles) cannot be deterministically bounded.

\section{Case studies}\label{sec:case}

In this section, we illustrate our control method on two case studies.  All simulations are conducted in \textsf{Python 3}  and we use \textsf{SCIP} \cite{bestuzheva2023enabling} to solve the optimization problem. 
Our implementations are available at \url{https://github.com/Xinyi-Yu/STL-Synthesis-among-Uncontrollable-Agents}, where more details can be found.

\subsection{Building temperature control}\label{subsec:case1}

\begin{figure}[t]
	\centering
	% !TEX root = ../main.tex
\usetikzlibrary {patterns.meta}

\def\unit{1}
\def\height{2.5}
\def\heightdoora{1.05}
\def\heightdoorb{1.45}

\def\heighttexta{1.35}
\def\heighttextb{1.15}

\begin{tikzpicture}

    \draw[thick] (2.2*\unit, 0*\unit) -- (0*\unit, 0*\unit) -- (0*\unit, \height*\unit)  -- (5*\unit, \height*\unit) -- (5*\unit, 0*\unit)  -- (2.8*\unit, 0*\unit);

    \draw[thick] (1.7*\unit, 0*\unit) -- (1.7*\unit, \heightdoora*\unit);
    \draw[thick] (1.7*\unit, \heightdoorb*\unit) -- (1.7*\unit, \height*\unit);
    \draw[thick] (3.3*\unit, 0*\unit) -- (3.3*\unit, \heightdoora*\unit);
    \draw[thick] (3.3*\unit, \heightdoorb*\unit) -- (3.3*\unit, \height*\unit);

    \draw[blue, densely dashed] (1.7*\unit, \heightdoora*\unit) -- (1.7*\unit, \heightdoorb*\unit);
    \draw[blue, densely dashed] (3.3*\unit, \heightdoora*\unit) -- (3.3*\unit, \heightdoorb*\unit);
    \draw[blue, densely dashed] (2.2*\unit, 0*\unit) -- (2.8*\unit, 0*\unit);
	
    \draw (0.8*\unit, \heighttexta*\unit) node {\tiny{$Y_{r2}$: Room 2}};
    \draw (0.8*\unit, \heighttextb*\unit) node {\tiny{temp.}};
    \draw (4.2*\unit, \heighttexta*\unit) node {\tiny{$Y_{r3}$: Room 3}};
    \draw (4.2*\unit, \heighttextb*\unit) node {\tiny{temp.}};
    \draw (2.5*\unit, \heighttexta*\unit) node {\tiny{$x$: Room 1}};
    \draw (2.5*\unit, \heighttextb*\unit) node {\tiny{temp.}};

    % the case of three meeting room
	% \draw[thick] (1.7*\unit, 0*\unit) -- (0*\unit, 0*\unit) -- (0*\unit, 4*\unit)  -- (4*\unit, 4*\unit) -- (4*\unit, 0*\unit)  -- (2.3*\unit, 0*\unit);

    % \draw[thick] (2.3*\unit, 0*\unit) -- (2.3*\unit, 1*\unit)  -- (3*\unit, 1*\unit)  -- (3*\unit, 1.8*\unit);
    
    % \draw[thick] (3*\unit, 2.2*\unit) -- (3*\unit, 3*\unit) -- (2.2*\unit, 3*\unit);

    % \draw[thick] (1.8*\unit, 3*\unit) -- (1*\unit, 3*\unit) -- (1*\unit, 2.2*\unit);

    % \draw[thick] (1*\unit, 1.8*\unit) -- (1*\unit, 1*\unit) -- (1.7*\unit, 1*\unit) -- (1.7*\unit, 0*\unit);

    % \draw[thick] (0*\unit, 0*\unit) -- (1*\unit, 1*\unit);
    % \draw[thick] (0*\unit, 4*\unit) -- (1*\unit, 3*\unit);
    % \draw[thick] (4*\unit, 4*\unit) -- (3*\unit, 3*\unit);
    % \draw[thick] (4*\unit, 0*\unit) -- (3*\unit, 1*\unit);

    % \draw[blue, densely dashed] (1*\unit, 1.8*\unit) -- (1*\unit, 2.2*\unit);
    % \draw[blue, densely dashed] (1.8*\unit, 3*\unit) -- (2.2*\unit, 3*\unit);
    % \draw[blue, densely dashed] (3*\unit, 1.8*\unit) -- (3*\unit, 2.2*\unit);
    % \draw[blue, densely dashed] (1.7*\unit, 0*\unit) -- (2.3*\unit, 0*\unit);
	
    % \draw (0.5*\unit, 2.1*\unit) node {\tiny{$T_4$ (Room 4}};
    % \draw (0.5*\unit, 1.9*\unit) node {\tiny{temp.)}};

    % \draw (2*\unit, 3.5*\unit) node {\tiny{$T_3$ (Room 3 temp.)}};

    % \draw (3.5*\unit, 2.1*\unit) node {\tiny{$T_2$ (Room 2}};
    % \draw (3.5*\unit, 1.9*\unit) node {\tiny{temp.)}};

    % \draw (2*\unit, 2*\unit) node {\tiny{$T_1$ (Room 1 temp.)}};

    % \draw (2*\unit, -0.3*\unit) node {\tiny{$T_e$ (Outside temp.)}};

\end{tikzpicture}
	\caption{The floor plan of the hall.}
	\label{fig:layout}
\end{figure}
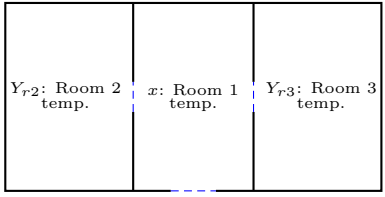

\begin{figure}
    \centering
    \includegraphics[width = 0.7\linewidth]{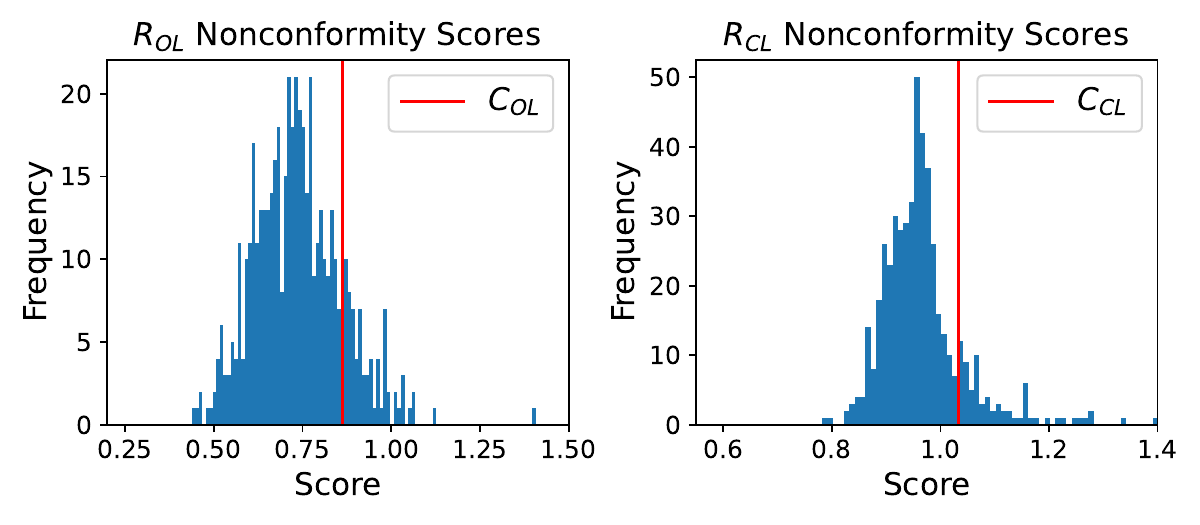}
    \caption{Nonconformity scores $R_{OL}^{(j)}$ (left) and $R_{CL}^{(j)}$ (right) on $D_{cal}$ in the temperature control case.}
    \label{fig:case1_his}
\end{figure}

\begin{figure}
    \centering
    \includegraphics[width = 0.7\linewidth]{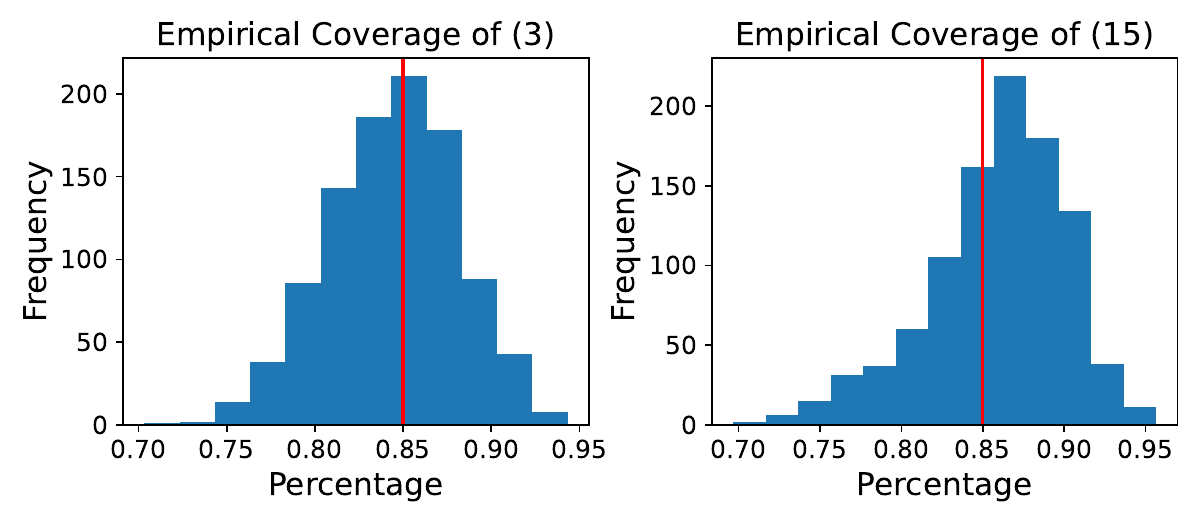}
    \caption{Empirical validation results about the coverage of $R_{OL}^{(j)} \leq C_{OL}$ (left) and $R_{CL}^{(j)} \leq C_{CL}$ (right) in the temperature control case.}
    \label{fig:case1_his_emp}
\end{figure}

\begin{figure*}
    \centering
    \includegraphics[width = 1\linewidth]{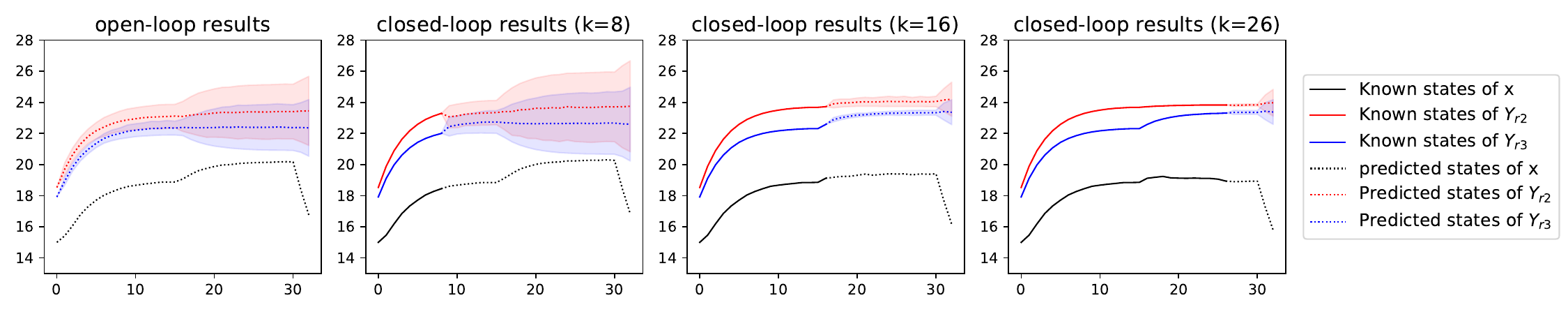}
    \caption{The result of the temperature qualitative control of the first case study.}
    \label{fig:case1_trajectory_qualitative}
\end{figure*}

\begin{figure*}
    \centering
    \includegraphics[width = 1\linewidth]{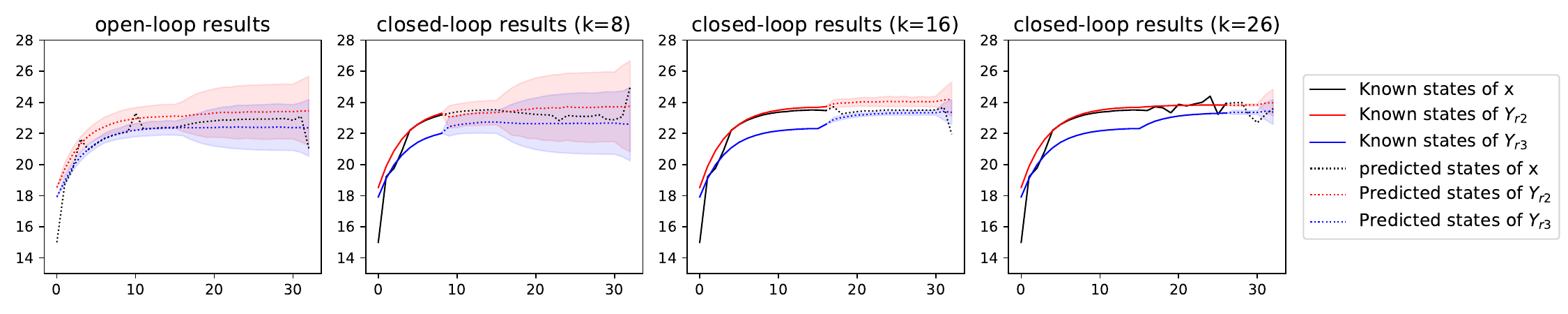}
    \caption{The result of the temperature quantitative control of the first case study.}
    \label{fig:case1_trajectory_quantitative}
\end{figure*}

We consider the temperature control of a public space in a hall with floor plan shown in  Figure~\ref{fig:layout}. Specifically, we can control the temperature of Room 1, which is the public space, while Rooms 2 and 3 are uncontrollable for us, e.g., these may be meeting rooms where the temperature can only be adjusted manually. Formally, the dynamics of the system are given as
\[
	x_{k+1} = x_k + \tau_s(\alpha_e(T_e - x_k) + \alpha_H(T_h - x_k )u_k),
\]
where the state $x_k \!\in\! \mathcal{X}=[0,45]$ denotes the temperature of the public space at time $k$, the control input $u_k \!\in\! \mathcal{U} = [0,1]$ is the ratio of the heater valve, $\tau_s$ is the sampling time for the discrete dynamics, $T_h = 55^\circ C$ is the heater temperature, $T_e = 5^\circ C$ is the outside temperature, and $\alpha_e = 0.06$ and $\alpha_H = 0.08$ are  heat exchange coefficients. The initial state is $x_0 = 5$. The model is adopted from \cite{jagtap2020formal}.

\textbf{System Specification.}
The task is to ensure that the difference between the temperatures in the public space and the two meeting rooms is bounded, e.g., such that during transition between rooms people feel comfortable. Let $Y_{r2}$ and $Y_{r3}$ denote the uncontrollable random variables describing the temperatures of Rooms 2 and 3.
In particular, we require that, 
the difference between $x$ and the temperatures of $Y_{r2}$ and $Y_{r3}$ should  be below a threshold of $5^\circ C$ for at least an hour starting from within the first $4$ minutes.
We set the sampling time to be  $\tau_s = 2$ minutes, and we describe the task as follows:
\begin{align*}
    \phi \!\coloneqq\! \mathbf{F}_{[0,2]} \mathbf{G}_{[0,30]} \big( x\!-\!Y_{r2} \!\leq\! 5 \wedge x\!-\!Y_{r2} \!\geq\! -5 \wedge x\!-\!Y_{r3} \!\leq\! 5 \wedge x\!-\!Y_{r3} \!\geq\! -5  \big). 
\end{align*}
In this case, we set the failure probability to be 15\%, i.e., $\delta := 0.15$.

\textbf{Behavior of uncontrollable rooms.} The group of people will adjust the  temperature in the rooms based on their individual temperature preferences when they enter the meeting room. We assume the temperature variation in $Y_{r2}$ and $Y_{r3}$  generated by this adjustment is described by Newton's Law of Cooling \cite{vollmer2009newton}.

\textbf{Data Collection.} We collected $2000$ trajectories of temperature trajectories for Rooms 2 and 3 -- generated as described before. We split them into training (used for the trajectory predictor), calibration (used for conformal prediction), and test (used for validation) datasets with sizes $|D_{train}|=500, |D_{cal}|=500$ and $|D_{test}|=1000$, respectively. From the training data $D_{train}$, we train a long-short-term memory (LSTM) network to make predictions about the room temperature. We note that we warm-start the LSTM at time $k=0$ with data, e.g., the LSTM is fed with past information $y_{-6}, \dots, y_{-1}$.

\textbf{Multi-agent prediction regions. } Fig.~\ref{fig:case1_his} shows histograms of the nonconformity scores $R_{OL}^{(j)}$ and $R_{CL}^{(j)}$ evaluated on $D_{cal}$. From these histograms of nonconformity scores, we compute $C_{OL} := 0.863$ and $C_{CL} := 1.033$ according to Theorems \ref{thm:proba} and \ref{thm:prob_close}. 
Next, we empirically validate the correctness of the prediction regions in \eqref{eq:cp_open_} and \eqref{eq:cp_closed}. 
We perform the following two experiments $1000$ times each. 
In the first experiment, we randomly sample $150$ calibration trajectories from $D_{cal}$ and $150$ test trajectories from $D_{test}$. Then, we construct $C_{OL}$ and $C_{CL}$ from the $150$ calibration trajectories and compute the ratio of how many of the $150$ test trajectories satisfy $R_{OL}^{(j)} \leq C_{OL}$ and $R_{CL}^{(j)} \leq C_{CL}$, respectively.
Fig.~\ref{fig:case1_his_emp} shows the histogram over these ratios, and we can observe that the result achieves the desired coverage $1-\delta=0.85$.
In the second experiment, we randomly sample $150$ calibration trajectories from $D_{cal}$ and $1$ test trajectories from $D_{test}$. Then, we construct $C_{OL}$ and $C_{CL}$ from the $150$ calibration trajectories and check whether or not the sampled test data satisfies $R_{OL}^{(j)} \leq C_{OL}$ and $R_{CL}^{(j)} \leq C_{CL}$, respectively.
We find a ratio of $0.913$ for $R_{OL}^{(j)} \leq C_{OL}$ and a ratio of $0.871$ for $R_{CL}^{(j)} \leq C_{CL}$, respectively, which further empirically confirms \eqref{eq:cp_open_} and \eqref{eq:cp_closed}.
For one of these test trajectories, Fig.~\ref{fig:case1_trajectory_qualitative} shows the prediction regions by shaded areas defined by $\hat{Y}_{\tau|k}$ and $C_{\tau|k}$ for $\tau > k$ where $k=0$, $k=8$, $k=16$ and $k=26$ from the left to the right, respectively. 

\textbf{Prediction region comparison.} We compare our method with \cite{cleaveland2024conformal}, where a parameterized prediction error was proposed. Instead of computing the maximum prediction error $\sigma$ as in \eqref{eq:nonscore_open} and \eqref{eq:nonscore_closed_}, they proposed an optimization-based method to obtain an in some sense optimal $\sigma$ that results in small prediction regions (see \cite{cleaveland2024conformal} for details).
Specifically, they formulate the problem as a mixed integer linear complementarity program (MILCP).
We extend their framework to the multi-agent case and solve the optimization problem with CasADI \cite{andersson2019casadi}.
The result is shown in Fig.~\ref{fig:case1_comp}.
Our computation time is 0.007s, much less than theirs, which is 2.82s.  Furthermore, the shape of the region looks similar. It is worth mentioning that although they claim they can find the smallest prediction region theoretically, it is sometimes difficult to realize this in practice due to nonconvexity of the optimization problem. 

\begin{figure}
    \centering
    \includegraphics[width = 0.7\linewidth]{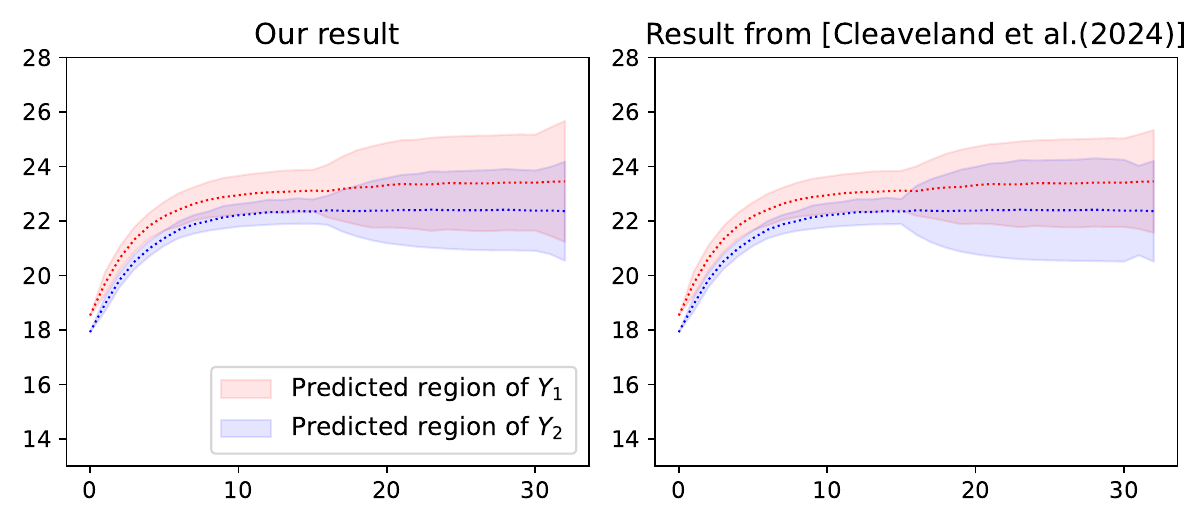}
    \caption{The comparison result of the open-loop prediction regions of the first case study.}
    \label{fig:case1_comp}
\end{figure}

\textbf{Qualitative predictive control synthesis.} 
We run the open-loop and closed-loop control frameworks $1000$ times with data from $D_{test}$, respectively, where the cost function is set to be the minimum summation of squared control inputs as $J \coloneqq \sum_{k=0}^{31} u_k^2$.
In the open-loop control case, the problem is feasible for 1000 times, where the average computation time, average robustness, and task satisfaction rate\footnote{The task satisfaction rate is computed as the ratio between the number of times $\phi$ is satisfied and the number of times the optimization problem is feasible at time $k=0$.} are 0.148s, 0.15, and 0.955, respectively.
In the closed-loop control case, the problem is feasible at the initial time $k=0$ for $999$ times\footnote{We note the difference in initial feasibility compared to the open-loop controller, which is caused since  prediction regions $\mathcal{B}_{\tau|0}$ of the closed-loop controller are more conservative.}, and recursive feasibility is achieved for $986$ times, where the average total computation time, average robustness, and task satisfaction rate are 5.90s, 0.07, and 0.987, respectively.
We remark that the closed-loop controller has slightly lower robustness than the open-loop controller. This is since the closed-loop prediction regions $\mathcal{B}_{\tau|k}$ for large $k$ appear to be less conservative than open-loop prediction regions. We also see that the closed-loop controller has a higher task satisfaction rate since the controller can update the control inputs in a receding horizon fashion to observations of the uncontrollable agents.
Fig.~\ref{fig:case1_trajectory_qualitative} shows one of the results of the proposed control framework with the qualitative encoding. Fig.~\ref{fig:case1_trajectory_qualitative}(a) specifically presents the result of the open loop controller where the prediction regions are large for large $\tau$, while Figs. \ref{fig:case1_trajectory_qualitative}(b)(c)(d) present the result of the closed loop controller at times $k=8, 16, 26$, respectively.

\textbf{Quantitative predictive control synthesis.} 
Similar to the qualitative control, we run the open-loop and closed-loop control frameworks with quantitative encoding $1000$ times with data in $D_{test}$, respectively, where we enforce the robust semantics to be larger than 1, i.e., $\bar{r}_{0|k}^\phi \geq 0$, and we maximize $\bar{r}_{0|k}^\phi$ in the cost function, i.e., we set $J \coloneqq - \bar{r}_{0|k}^{\phi}$.
In the open-loop control case, the problem is feasible for 927 times, where the average computation time, average robustness, and task satisfaction rate are 1.79s, 2.92, and 1, respectively.
In the closed-loop control case, the problem is feasible at the initial time $k=0$ for 923 times, and the recursive feasibility is achieved for 890 times, where the average computation time, average robustness, and task satisfaction rate are 147.29s, 2.80, and 0.964, respectively.
The closed-loop controller has a lower task satisfaction rate than the open-loop controller due to the nonconvexity of the optimization problem and performance issues of our solver in practice.
It is worthy noting that the computation time of the quantitative encoding is greater than that of the qualitative encoding, although it may achieve better robustness performance.
As we can see in the result in Fig.~\ref{fig:case1_trajectory_quantitative}, the trajectory stays near the center of the prediction regions, leading to higher robustness.

\subsection{Robot motion planning}

\begin{figure}
    \centering
    \includegraphics[width = 0.7\linewidth]{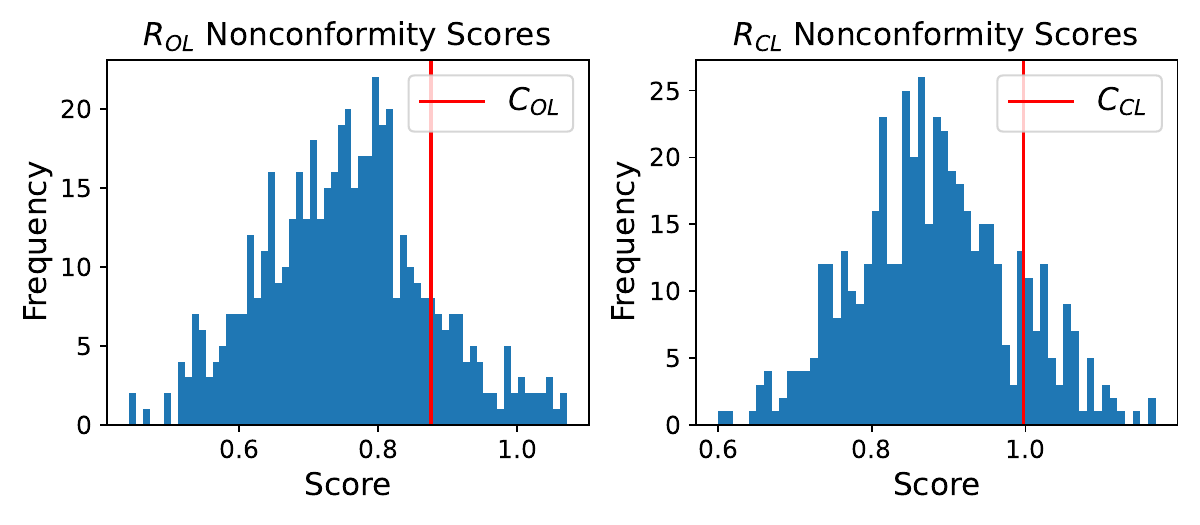}
    \caption{Nonconformity scores $R_{OL}^{(j)}$ (left) and $R_{CL}^{(j)}$ (right) on $D_{cal}$ in the robot motion planning case.}
    \label{fig:case2_his}
\end{figure}

\begin{figure}
    \centering
    \includegraphics[width = 0.7\linewidth]{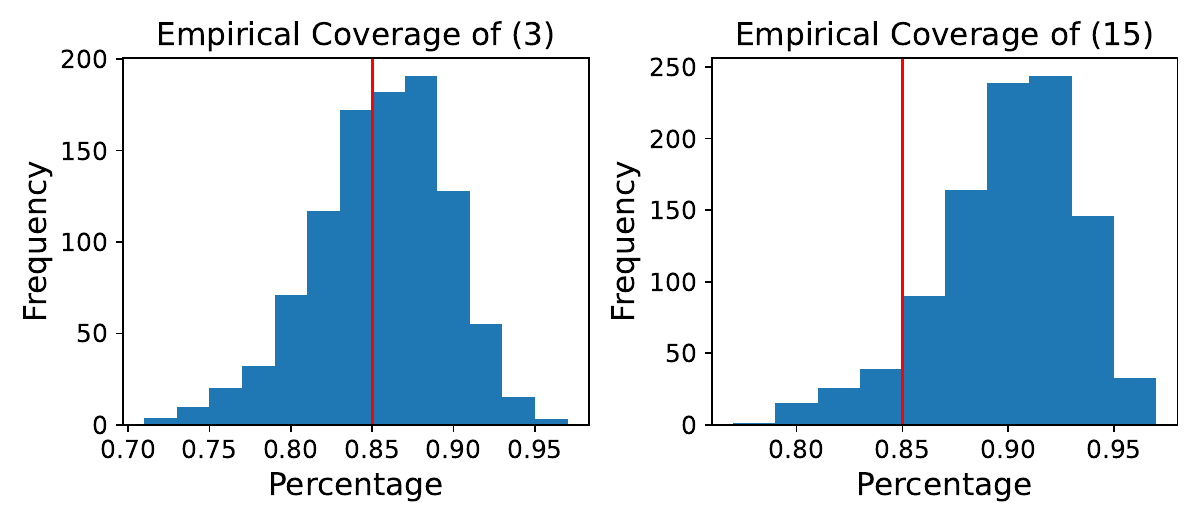}
    \caption{Empirical validation results about the coverage of $R_{OL}^{(j)} \leq C_{OL}$ (left) and $R_{CL}^{(j)} \leq C_{CL}$ (right) in the robot motion planning case.}
    \label{fig:case2_his_emp}
\end{figure}

\begin{figure*}
    \centering
    \includegraphics[width = 1\linewidth]{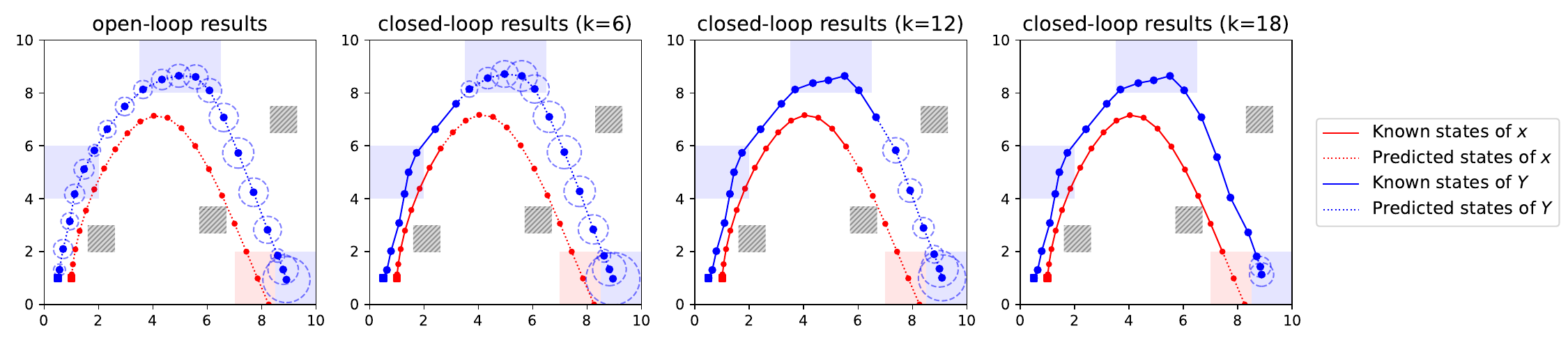}
    \caption{The result of the robot motion planning of the second case study.}
    \label{fig:case2_trajectory}
\end{figure*}

We consider the planar motion of a single robot (denoted by $R_1$) with double integrator dynamics. The  system model, with a sampling period of 1 second, is described as
\[
	x_{k+1} = 
	\begin{bmatrix}
		1 & 1 & 0 & 0 \\
		0 & 1 &  0 & 0 \\
		0 & 0 & 1 & 1 \\
		0 & 0 & 0 & 1 
	\end{bmatrix}
	x_{k} + 
	\begin{bmatrix} 
		0.5 & 0   \\ 
		1 & 0 \\ 
		0 & 0.5 \\ 
		0 & 1 
	\end{bmatrix}
	u_k,
\]
where the state $x_k = [p_x^1 \ v_x^1 \ p_y^1 \ v_y^1]^\top$ denotes the $x$-position, $x$-velocity, $y$-position and $y$-velocity, and where the control input $u_k = [u_x \ u_y]^T$ denotes the $x$-acceleration and $y$-acceleration, respectively.
The physical constraints are $x \!\in\! \mathcal{X} = [0, 10] \times [-1.5, 1.5] \times [0,10] \times [-1.5, 1.5]$ and $u \!\in\! \mathcal{U} = [-1, 1] \times [-1,1]$.
The initial state is $x_0 = [1, 0, 1, 0]^\top$.
Furthermore, we have one more uncontrollable agent (denoted by $R_2$) and its state  $Y_k = [P_x^2 \ P_y^2]^\top$  denotes the $x$-position and $y$-position.
Then we define the joint state as $s_k = [p_x^1 \ v_x^1 \ p_y^1 \ v_y^1 \ P_x^2 \ P_y^2]^\top$.
In this case, $R_1$ is controllable and $R_2$ is uncontrollable.

\textbf{System Specification.}
The objective of the two robots is to visit specific regions to perform specific tasks, such as collecting packages and transporting them to a final destination. For this collaborative task, $R_2$ is the leader and $R_1$ should follow $R_2$.
Specifically, $R_2$ should arrive and stay in the left blue region and top blue region, as shown in the Fig.~\ref{fig:case2_trajectory}, within the time intervals $[4, 6]$ and $[9,13]$, respectively. 
After that, starting from a certain time instant between the time interval $[16, 18]$, $R_1$ and $R_2$ should stay in bottom right red and blue regions respectively for at least 2 time instants.
During the task execution, the distances between the two agents should be at most $D$ meters such that the two robots can communicate or perform collaborative tasks.
Furthermore, the two robots should always avoid the obstacles (grey shaded areas).
In particular, such a task can be described by the following formula:
\begin{align}
    \phi \coloneqq \mathbf{G}_{[4, 6]} \pi^{\mu_1} \wedge \mathbf{G}_{[9, 13]} \pi^{\mu_2} \wedge \mathbf{F}_{[16, 18]}\mathbf{G}_{[0, 2]} (\pi^{\mu_3} \wedge \pi^{\mu_4})
    \wedge \mathbf{G}_{[0, 20]} (\pi^{\mu_{close}} \wedge \pi^{\mu_{obs1}}\wedge \pi^{\mu_{obs2}}), \nonumber
\end{align}
where $\mu_1 \coloneqq \min (P_x^2, 2-P_x^2, P_y^2-4, 6-P_y^2)$, $\mu_2 \coloneqq \min (P_x^2-3.5, 6.5-P_x^2, P_y^2-8, 10-P_y^2)$, $\mu_3 \coloneqq \min (P_x^2-8.5, 10-P_x^2, P_y^2, 2-P_y^2)$ and $\mu_4 \coloneqq \min (p_x^1-7, 8.5-p_x^1, p_y^1, 2-P_y^1)$ represent the tasks for $R_2$ in the left blue region, $R_2$ in the top blue region, $R_2$ in the bottom right blue region, and $R_1$ in the bottom red region, respectively.  Here,
$\mu_{close} \coloneqq D - \min (P_x^2-p_x^1, p_x^1-P_x^2, P_y^2-p_y^1, p_y^1-P_y^2)$ is the predicate regarding the distance requirement between the two robots where we set $D=2$. The predicates
$\mu_{obs1} \coloneqq \min ( \max(1.6 - p_x^1, p_x^1 - 2.6, 2 - p_y^1, p_y^1 - 3), \max(8.3 - p_x^1, p_x^1 - 9.3, 6.5 - p_y^1, p_y^1 - 7.5), \max(5.7 - p_x^1, p_x^1 - 6.7, 2.7 - p_y^1, p_y^1 - 3.7))$ and $\mu_{obs2} \coloneqq \min ( \max(1.6 - P_x^2, P_x^2 - 2.6, 2 - P_y^2, P_y^2 - 3), \max(8.3 - P_x^2, P_x^2 - 9.3, 6.5 - P_y^2, P_y^2 - 7.5), \max(5.7 - P_x^2, P_x^2 - 6.7, 2.7 - P_y^2, P_y^2 - 3.7))$ describe the tasks for obstacle avoidance.
From an individual robot perspective, we can decompose $\phi$ into two task for each of the two robots as follows
\begin{align*}
    & \phi_{R_1} \coloneqq \mathbf{F}_{[16, 18]}\mathbf{G}_{[0, 2]} \pi^{\mu_4}  \wedge \mathbf{G}_{[0, 20]} (\pi^{\mu_{close}} \wedge \pi^{\mu_{obs1}}), \\
    & \phi_{R_2} \coloneqq \mathbf{G}_{[4, 6]} \pi^{\mu_1} \wedge \mathbf{G}_{[9, 13]} \pi^{\mu_2} \wedge \mathbf{F}_{[16, 18]}\mathbf{G}_{[0, 2]} \pi^{\mu_3} \wedge \mathbf{G}_{[0, 20]} \pi^{\mu_{obs2}}.
\end{align*}
As $R_2$ will lead the task, its behavior will not be influenced by $R_1$. On the other hand, $R_1$ is responsible for the task $\mu_{close}$ which means it should always track the behavior of $R_2$.
Regarding the execution of the task, we only enforce $R_1$ to achieve $\phi_{R_1}$ instead of $\phi$, which makes sense in the leader-follower setting of this case study.

\textbf{Motion of uncontrollable agent. }
The motion of the uncontrollable agent $R_2$ is described as follows. We set its dynamical system as $y_{k+1} = f(y_{k}, u_k^y) + \omega_k$ where $f(y_{k}, u_k^y)$ describes the same double integrator dynamics as for $R_1$, and $\omega_k \in \mathcal{W}$ is a uniformly distributed disturbance from the set $\mathcal{W} = ([-0.15,0.15] \times [0,0])^2$.
We compute an closed-loop control sequence for $R_2$ under the specification $\phi_{R_2}$ using a standard MIP encoding \cite{raman2014model}. 
% We use the obtained trajectory of $R_2$ as a nominal state sequence and add disturbance to the positions $p_x$ and $p_y$.  Specifically, we added uniform noise within the interval $[-0.38, 0.38]$ on each component.

\textbf{Data collection.} We collected $2000$ trajectories of the uncontrollable agent. We split the data  into training, calibration, and test datasets with sizes $|D_{train}|=500, |D_{cal}|=500$ and $|D_{test}|=1000$, respectively. We again trained an LSTM on $D_{train}$ for trajectory prediction.

\textbf{Prediction regions. } Fig.~\ref{fig:case2_his} shows histograms of the nonconformity scores $R_{OL}^{(j)}$ and $R_{CL}^{(j)}$ evaluated on $D_{cal}$. Based on these nonconformity scores, we have that $C_{OL} = 0.876$ and $C_{CL} = 0.997$ by using $\delta = 0.15$.
Next, we empirically validate the correctness of the prediction regions by checking whether or not \eqref{eq:cp_open_} and \eqref{eq:cp_closed} hold on $D_{test}$ by the same two experiments as the temperature control case. 
Fig.~\ref{fig:case2_his_emp} shows the histogram over the result of the ratios in the first experiment, and we can observe that the result achieves the desired coverage $1-\delta=0.85$.
In the second experiment, we obtain a ratio of 0.854 for $R_{OL}^{(j)} \leq C_{OL}$ and a ratio of 0.894 for $R_{CL}^{(j)} \leq C_{CL}$, respectively, which also empirically confirms \eqref{eq:cp_open_} and \eqref{eq:cp_closed}.
% Then, we empirically evaluate the correctness of the prediction regions by checking whether or not \eqref{eq:cp_open_} and \eqref{eq:cp_closed_} holds on $D_{test}$. We find that \change{xx} of 100 test trajectories satisfy $R_{OL}^{(j)} \leq C_{OL}$ and \change{xx} of 100 test trajectories satisfy $R_{CL}^{(j)} \leq C_{CL}$, respectively, which empirically confirms \eqref{eq:cp_open_} and \eqref{eq:cp_closed_}.
For one of these test trajectories, Fig.~\ref{fig:case2_trajectory} shows the prediction regions defined by $\hat{Y}_{\tau|k}$ and $C_{\tau|k}$ for $\tau > k$ where $k=0$, $k=6$, $k=12$ and $k=18$, respectively.

\textbf{Predictive control synthesis. } 
As in our first experiment, we run the open-loop and the closed-loop controller with the qualitative encoding $1000$ times with data from $D_{test}$. The cost function encodes the trade off between the sum of the squared control inputs and velocities as $J \coloneqq \sum_{k=0}^{20} (0.97||u_k||^2 + 0.03(v_x^1)^2 + 0.03(v_y^1)^2)$. In the open-loop control case, the problem is feasible for 1000 times, where the average computation time, average robustness, and task satisfaction rate are 0.70s, 0.06490, and 1, respectively.
In the closed-loop control case, the problem is feasible at the initial time $k=0$ for $1000$ times, with recursive feasibility in all cases, where the average total computation time, average robustness, and  task satisfaction rate are 14.80s, 0.06489, and 1, respectively.
The closed-loop case has again a slightly lower robustness for the same reasons as in the first experiment.
Fig.~\ref{fig:case2_trajectory} shows one of the results of the proposed control framework, where we can see that robot $R_1$ will use the prediction regions of $R_2$ to compute a control input that results in task satisfaction. Fig.~\ref{fig:case2_trajectory}(a) specifically presents the result of the open loop controller while Figs. \ref{fig:case2_trajectory}(b)(c)(d) present the result of the closed loop controller at times $k=6, 12, 18$, respectively.

\section{Conclusion}\label{sec:con}
In this paper, we presented a control framework for signal temporal logic (STL) control synthesis  where the STL specification is defined over the dynamical system and uncontrollable dynamic agents. We use trajectory predictors and conformal prediction to provide prediction regions for the uncontrollable agents that are valid with high probability.
Then, we formulated predictive open-loop control laws that consider the worst case realization of the uncontrollable agents within the prediction region. Specifically, we proposed a mixed integer program (MIP) that results in the probabilistic satisfaction of the STL specification, and we presented an equivalent MIP program based on the KKT conditions of the original MIP program to obtain more efficient solutions.
Finally, we presented a probabilistic recursively feasible closed-loop control framework that can provide high probability guarantee for the task satisfaction.
We provided two case studies for temperature control and robot motion planning, and we demonstrated the correctness and efficiency of our control framework.

\bibliographystyle{ACM-Reference-Format}
\bibliography{STL}

%%% -*-BibTeX-*-
%%% Do NOT edit. File created by BibTeX with style
%%% ACM-Reference-Format-Journals [18-Jan-2012].

\begin{thebibliography}{73}

%%% ====================================================================
%%% NOTE TO THE USER: you can override these defaults by providing
%%% customized versions of any of these macros before the \bibliography
%%% command.  Each of them MUST provide its own final punctuation,
%%% except for \shownote{} and \showURL{}.  The latter two
%%% do not use final punctuation, in order to avoid confusing it with
%%% the Web address.
%%%
%%% To suppress output of a particular field, define its macro to expand
%%% to an empty string, or better, \unskip, like this:
%%%
%%% \newcommand{\showURL}[1]{\unskip}   % LaTeX syntax
%%%
%%% \def \showURL #1{\unskip}           % plain TeX syntax
%%%
%%% ====================================================================

\ifx \showCODEN    \undefined \def \showCODEN     #1{\unskip}     \fi
\ifx \showISBNx    \undefined \def \showISBNx     #1{\unskip}     \fi
\ifx \showISBNxiii \undefined \def \showISBNxiii  #1{\unskip}     \fi
\ifx \showISSN     \undefined \def \showISSN      #1{\unskip}     \fi
\ifx \showLCCN     \undefined \def \showLCCN      #1{\unskip}     \fi
\ifx \shownote     \undefined \def \shownote      #1{#1}          \fi
\ifx \showarticletitle \undefined \def \showarticletitle #1{#1}   \fi
\ifx \showURL      \undefined \def \showURL       {\relax}        \fi
% The following commands are used for tagged output and should be
% invisible to TeX
\providecommand\bibfield[2]{#2}
\providecommand\bibinfo[2]{#2}
\providecommand\natexlab[1]{#1}
\providecommand\showeprint[2][]{arXiv:#2}

\bibitem[Akella et~al\mbox{.}(2024)]%
        {akella2024sample}
\bibfield{author}{\bibinfo{person}{Prithvi Akella}, \bibinfo{person}{Anushri Dixit}, \bibinfo{person}{Mohamadreza Ahmadi}, \bibinfo{person}{Joel~W Burdick}, {and} \bibinfo{person}{Aaron~D Ames}.} \bibinfo{year}{2024}\natexlab{}.
\newblock \showarticletitle{Sample-based bounds for coherent risk measures: Applications to policy synthesis and verification}.
\newblock \bibinfo{journal}{\emph{Artificial Intelligence}} (\bibinfo{year}{2024}).
\newblock


\bibitem[Andersson et~al\mbox{.}(2019)]%
        {andersson2019casadi}
\bibfield{author}{\bibinfo{person}{J. Andersson}, \bibinfo{person}{J. Gillis}, \bibinfo{person}{G. Horn}, \bibinfo{person}{J. Rawlings}, {and} \bibinfo{person}{M. Diehl}.} \bibinfo{year}{2019}\natexlab{}.
\newblock \showarticletitle{{CasADi}: a software framework for nonlinear optimization and optimal control}.
\newblock \bibinfo{journal}{\emph{Mathematical Programming Computation}} (\bibinfo{year}{2019}).
\newblock


\bibitem[Angelopoulos and Bates(2021)]%
        {angelopoulos2021gentle}
\bibfield{author}{\bibinfo{person}{A. Angelopoulos} {and} \bibinfo{person}{S. Bates}.} \bibinfo{year}{2021}\natexlab{}.
\newblock \showarticletitle{A gentle introduction to conformal prediction and distribution-free uncertainty quantification}.
\newblock \bibinfo{journal}{\emph{arXiv preprint arXiv:2107.07511}} (\bibinfo{year}{2021}).
\newblock


\bibitem[Ayala et~al\mbox{.}(2011)]%
        {ayala2011temporal}
\bibfield{author}{\bibinfo{person}{A. Ayala}, \bibinfo{person}{S. Andersson}, {and} \bibinfo{person}{C. Belta}.} \bibinfo{year}{2011}\natexlab{}.
\newblock \showarticletitle{Temporal logic control in dynamic environments with probabilistic satisfaction guarantees}. In \bibinfo{booktitle}{\emph{IEEE/RSJ IROS}}.
\newblock


\bibitem[Bemporad and Morari(1999)]%
        {bemporad1999control}
\bibfield{author}{\bibinfo{person}{A. Bemporad} {and} \bibinfo{person}{M. Morari}.} \bibinfo{year}{1999}\natexlab{}.
\newblock \showarticletitle{Control of systems integrating logic, dynamics, and constraints}.
\newblock \bibinfo{journal}{\emph{Automatica}} (\bibinfo{year}{1999}).
\newblock


\bibitem[Bestuzheva et~al\mbox{.}(2023)]%
        {bestuzheva2023enabling}
\bibfield{author}{\bibinfo{person}{K. Bestuzheva}, \bibinfo{person}{M. Besan{\c{c}}on}, \bibinfo{person}{W. Chen}, \bibinfo{person}{A. Chmiela}, \bibinfo{person}{T. Donkiewicz}, \bibinfo{person}{J. van Doornmalen}, \bibinfo{person}{L. Eifler}, \bibinfo{person}{O. Gaul}, \bibinfo{person}{G. Gamrath}, \bibinfo{person}{A. Gleixner}, {et~al\mbox{.}}} \bibinfo{year}{2023}\natexlab{}.
\newblock \showarticletitle{Enabling research through the SCIP optimization suite 8.0}.
\newblock \bibinfo{journal}{\emph{ACM Trans. Math. Software}} (\bibinfo{year}{2023}).
\newblock


\bibitem[Boyd and Vandenberghe(2004)]%
        {boyd2004convex}
\bibfield{author}{\bibinfo{person}{S. Boyd} {and} \bibinfo{person}{L. Vandenberghe}.} \bibinfo{year}{2004}\natexlab{}.
\newblock \bibinfo{booktitle}{\emph{Convex optimization}}.
\newblock


\bibitem[B{\"u}y{\"u}kko{\c{c}}ak et~al\mbox{.}(2021)]%
        {buyukkoccak2021distributed}
\bibfield{author}{\bibinfo{person}{A. B{\"u}y{\"u}kko{\c{c}}ak}, \bibinfo{person}{D. Aksaray}, {and} \bibinfo{person}{Y. Yazicioglu}.} \bibinfo{year}{2021}\natexlab{}.
\newblock \showarticletitle{Distributed planning of multi-agent systems with coupled temporal logic specifications}. In \bibinfo{booktitle}{\emph{AIAA Scitech 2021 Forum}}.
\newblock


\bibitem[Cauchois et~al\mbox{.}(2024)]%
        {cauchois2024robust}
\bibfield{author}{\bibinfo{person}{M. Cauchois}, \bibinfo{person}{S. Gupta}, \bibinfo{person}{A. Ali}, {and} \bibinfo{person}{J. Duchi}.} \bibinfo{year}{2024}\natexlab{}.
\newblock \showarticletitle{Robust validation: Confident predictions even when distributions shift}.
\newblock \bibinfo{journal}{\emph{J. Amer. Statist. Assoc.}} (\bibinfo{year}{2024}).
\newblock


\bibitem[Charitidou and Dimarogonas(2022)]%
        {charitidou2022receding}
\bibfield{author}{\bibinfo{person}{M. Charitidou} {and} \bibinfo{person}{D. Dimarogonas}.} \bibinfo{year}{2022}\natexlab{}.
\newblock \showarticletitle{Receding horizon control with online barrier function design under signal temporal logic specifications}.
\newblock \bibinfo{journal}{\emph{IEEE TAC}} (\bibinfo{year}{2022}).
\newblock


\bibitem[Cleaveland et~al\mbox{.}(2024)]%
        {cleaveland2024conformal}
\bibfield{author}{\bibinfo{person}{M. Cleaveland}, \bibinfo{person}{I. Lee}, \bibinfo{person}{G. Pappas}, {and} \bibinfo{person}{L. Lindemann}.} \bibinfo{year}{2024}\natexlab{}.
\newblock \showarticletitle{Conformal prediction regions for time series using linear complementarity programming}. In \bibinfo{booktitle}{\emph{AAAI}}.
\newblock


\bibitem[Dixit et~al\mbox{.}(2023)]%
        {dixit2023adaptive}
\bibfield{author}{\bibinfo{person}{A. Dixit}, \bibinfo{person}{L. Lindemann}, \bibinfo{person}{S. Wei}, \bibinfo{person}{M. Cleaveland}, \bibinfo{person}{G. Pappas}, {and} \bibinfo{person}{J. Burdick}.} \bibinfo{year}{2023}\natexlab{}.
\newblock \showarticletitle{Adaptive conformal prediction for motion planning among dynamic agents}. In \bibinfo{booktitle}{\emph{L4DC}}. PMLR.
\newblock


\bibitem[Dokhanchi et~al\mbox{.}(2014)]%
        {dokhanchi2014line}
\bibfield{author}{\bibinfo{person}{A. Dokhanchi}, \bibinfo{person}{B. Hoxha}, {and} \bibinfo{person}{G. Fainekos}.} \bibinfo{year}{2014}\natexlab{}.
\newblock \showarticletitle{On-line monitoring for temporal logic robustness}. In \bibinfo{booktitle}{\emph{RV}}.
\newblock


\bibitem[Donz{\'e} and Maler(2010a)]%
        {donze2010robust}
\bibfield{author}{\bibinfo{person}{A. Donz{\'e}} {and} \bibinfo{person}{O. Maler}.} \bibinfo{year}{2010}\natexlab{a}.
\newblock \showarticletitle{Robust satisfaction of temporal logic over real-valued signals}. In \bibinfo{booktitle}{\emph{FORMATS}}.
\newblock


\bibitem[Donz{\'e} and Maler(2010b)]%
        {Alexandre2010}
\bibfield{author}{\bibinfo{person}{A. Donz{\'e}} {and} \bibinfo{person}{O. Maler}.} \bibinfo{year}{2010}\natexlab{b}.
\newblock \showarticletitle{Robust Satisfaction of Temporal Logic over Real-Valued Signals}. In \bibinfo{booktitle}{\emph{Formal Modeling and Analysis of Timed Systems}}.
\newblock


\bibitem[Fainekos and Pappas(2009)]%
        {fainekos2009robustness}
\bibfield{author}{\bibinfo{person}{G. Fainekos} {and} \bibinfo{person}{G. Pappas}.} \bibinfo{year}{2009}\natexlab{}.
\newblock \showarticletitle{Robustness of temporal logic specifications for continuous-time signals}.
\newblock \bibinfo{journal}{\emph{TCS}} (\bibinfo{year}{2009}).
\newblock


\bibitem[Farahani et~al\mbox{.}(2017)]%
        {farahani2017shrinking}
\bibfield{author}{\bibinfo{person}{S. Farahani}, \bibinfo{person}{R. Majumdar}, \bibinfo{person}{V. Prabhu}, {and} \bibinfo{person}{S. Soudjani}.} \bibinfo{year}{2017}\natexlab{}.
\newblock \showarticletitle{Shrinking horizon model predictive control with chance-constrained signal temporal logic specifications}. In \bibinfo{booktitle}{\emph{ACC}}.
\newblock


\bibitem[Fiacchini et~al\mbox{.}(2024)]%
        {fiacchini2024measured}
\bibfield{author}{\bibinfo{person}{M. Fiacchini}, \bibinfo{person}{M. Mammarella}, {and} \bibinfo{person}{F. Dabbene}.} \bibinfo{year}{2024}\natexlab{}.
\newblock \showarticletitle{Measured-state conditioned recursive feasibility for stochastic model predictive control}.
\newblock \bibinfo{journal}{\emph{arXiv preprint arXiv:2406.13522}} (\bibinfo{year}{2024}).
\newblock


\bibitem[Fontana et~al\mbox{.}(2023)]%
        {fontana2023conformal}
\bibfield{author}{\bibinfo{person}{M. Fontana}, \bibinfo{person}{G. Zeni}, {and} \bibinfo{person}{S. Vantini}.} \bibinfo{year}{2023}\natexlab{}.
\newblock \showarticletitle{Conformal prediction: a unified review of theory and new challenges}.
\newblock \bibinfo{journal}{\emph{Bernoulli}} (\bibinfo{year}{2023}).
\newblock


\bibitem[Gibbs and Candes(2021)]%
        {gibbs2021adaptive}
\bibfield{author}{\bibinfo{person}{I. Gibbs} {and} \bibinfo{person}{E. Candes}.} \bibinfo{year}{2021}\natexlab{}.
\newblock \showarticletitle{Adaptive conformal inference under distribution shift}.
\newblock \bibinfo{journal}{\emph{NeurIPS}} (\bibinfo{year}{2021}).
\newblock


\bibitem[Gibbs et~al\mbox{.}(2023)]%
        {gibbs2023conformal}
\bibfield{author}{\bibinfo{person}{I. Gibbs}, \bibinfo{person}{J. Cherian}, {and} \bibinfo{person}{E. Cand{\`e}s}.} \bibinfo{year}{2023}\natexlab{}.
\newblock \showarticletitle{Conformal Prediction With Conditional Guarantees}.
\newblock \bibinfo{journal}{\emph{arXiv preprint arXiv:2305.12616}} (\bibinfo{year}{2023}).
\newblock


\bibitem[Gilpin et~al\mbox{.}(2020)]%
        {gilpin2020smooth}
\bibfield{author}{\bibinfo{person}{Y. Gilpin}, \bibinfo{person}{V. Kurtz}, {and} \bibinfo{person}{H. Lin}.} \bibinfo{year}{2020}\natexlab{}.
\newblock \showarticletitle{A smooth robustness measure of signal temporal logic for symbolic control}.
\newblock \bibinfo{journal}{\emph{IEEE L-CSS}} (\bibinfo{year}{2020}).
\newblock


\bibitem[Gundana and Kress-Gazit(2021)]%
        {gundana2021event}
\bibfield{author}{\bibinfo{person}{D. Gundana} {and} \bibinfo{person}{H. Kress-Gazit}.} \bibinfo{year}{2021}\natexlab{}.
\newblock \showarticletitle{Event-based signal temporal logic synthesis for single and multi-robot tasks}.
\newblock \bibinfo{journal}{\emph{IEEE RAL}} (\bibinfo{year}{2021}).
\newblock


\bibitem[Gundana and Kress-Gazit(2022)]%
        {gundana2022event}
\bibfield{author}{\bibinfo{person}{D. Gundana} {and} \bibinfo{person}{H. Kress-Gazit}.} \bibinfo{year}{2022}\natexlab{}.
\newblock \showarticletitle{Event-based signal temporal logic tasks: Execution and feedback in complex environments}.
\newblock \bibinfo{journal}{\emph{IEEE RAL}} (\bibinfo{year}{2022}).
\newblock


\bibitem[Guo and Dimarogonas(2015)]%
        {guo2015multi}
\bibfield{author}{\bibinfo{person}{M. Guo} {and} \bibinfo{person}{D. Dimarogonas}.} \bibinfo{year}{2015}\natexlab{}.
\newblock \showarticletitle{Multi-agent plan reconfiguration under local {LTL} specifications}.
\newblock \bibinfo{journal}{\emph{IJRR}} (\bibinfo{year}{2015}).
\newblock


\bibitem[Han et~al\mbox{.}(2021)]%
        {han2021transformer}
\bibfield{author}{\bibinfo{person}{K. Han}, \bibinfo{person}{A. Xiao}, \bibinfo{person}{E. Wu}, \bibinfo{person}{J. Guo}, \bibinfo{person}{C. Xu}, {and} \bibinfo{person}{Y. Wang}.} \bibinfo{year}{2021}\natexlab{}.
\newblock \showarticletitle{Transformer in transformer}.
\newblock \bibinfo{journal}{\emph{NeurIPS}} (\bibinfo{year}{2021}).
\newblock


\bibitem[Hashemi et~al\mbox{.}(2023)]%
        {hashemi2023risk}
\bibfield{author}{\bibinfo{person}{N. Hashemi}, \bibinfo{person}{X. Qin}, \bibinfo{person}{J. Deshmukh}, \bibinfo{person}{G. Fainekos}, \bibinfo{person}{B. Hoxha}, \bibinfo{person}{D. Prokhorov}, {and} \bibinfo{person}{T. Yamaguchi}.} \bibinfo{year}{2023}\natexlab{}.
\newblock \showarticletitle{Risk-Awareness in Learning Neural Controllers for Temporal Logic Objectives}. In \bibinfo{booktitle}{\emph{ACC}}.
\newblock


\bibitem[Hoxha and Fainekos(2016)]%
        {hoxha2016planning}
\bibfield{author}{\bibinfo{person}{B. Hoxha} {and} \bibinfo{person}{G. Fainekos}.} \bibinfo{year}{2016}\natexlab{}.
\newblock \showarticletitle{Planning in Dynamic Environments Through Temporal Logic Monitoring.}. In \bibinfo{booktitle}{\emph{AAAI Workshop}}.
\newblock


\bibitem[Ilyes et~al\mbox{.}(2023)]%
        {ilyes2023stochastic}
\bibfield{author}{\bibinfo{person}{R. Ilyes}, \bibinfo{person}{Q. Ho}, {and} \bibinfo{person}{M. Lahijanian}.} \bibinfo{year}{2023}\natexlab{}.
\newblock \showarticletitle{Stochastic robustness interval for motion planning with signal temporal logic}. In \bibinfo{booktitle}{\emph{IEEE ICRA}}.
\newblock


\bibitem[Jagtap et~al\mbox{.}(2020)]%
        {jagtap2020formal}
\bibfield{author}{\bibinfo{person}{P. Jagtap}, \bibinfo{person}{S. Soudjani}, {and} \bibinfo{person}{M. Zamani}.} \bibinfo{year}{2020}\natexlab{}.
\newblock \showarticletitle{Formal synthesis of stochastic systems via control barrier certificates}.
\newblock \bibinfo{journal}{\emph{IEEE TAC}} (\bibinfo{year}{2020}).
\newblock


\bibitem[Jha et~al\mbox{.}(2018)]%
        {jha2018safe}
\bibfield{author}{\bibinfo{person}{S. Jha}, \bibinfo{person}{V. Raman}, \bibinfo{person}{D. Sadigh}, {and} \bibinfo{person}{S. Seshia}.} \bibinfo{year}{2018}\natexlab{}.
\newblock \showarticletitle{Safe autonomy under perception uncertainty using chance-constrained temporal logic}.
\newblock \bibinfo{journal}{\emph{Journal of Automated Reasoning}} (\bibinfo{year}{2018}).
\newblock


\bibitem[Kalluraya et~al\mbox{.}(2023)]%
        {kalluraya2023multi}
\bibfield{author}{\bibinfo{person}{S. Kalluraya}, \bibinfo{person}{G. Pappas}, {and} \bibinfo{person}{Y. Kantaros}.} \bibinfo{year}{2023}\natexlab{}.
\newblock \showarticletitle{Multi-robot mission planning in dynamic semantic environments}. In \bibinfo{booktitle}{\emph{IEEE ICRA}}.
\newblock


\bibitem[Kantaros et~al\mbox{.}(2019)]%
        {kantaros2019temporal}
\bibfield{author}{\bibinfo{person}{Y. Kantaros}, \bibinfo{person}{M. Guo}, {and} \bibinfo{person}{M. Zavlanos}.} \bibinfo{year}{2019}\natexlab{}.
\newblock \showarticletitle{Temporal logic task planning and intermittent connectivity control of mobile robot networks}.
\newblock \bibinfo{journal}{\emph{IEEE TAC}} (\bibinfo{year}{2019}).
\newblock


\bibitem[Kantaros and Zavlanos(2020)]%
        {kantaros2020stylus}
\bibfield{author}{\bibinfo{person}{Y. Kantaros} {and} \bibinfo{person}{M. Zavlanos}.} \bibinfo{year}{2020}\natexlab{}.
\newblock \showarticletitle{Stylus*: A temporal logic optimal control synthesis algorithm for large-scale multi-robot systems}.
\newblock \bibinfo{journal}{\emph{IJRR}} (\bibinfo{year}{2020}).
\newblock


\bibitem[Kloetzer and Belta(2009)]%
        {kloetzer2009automatic}
\bibfield{author}{\bibinfo{person}{M. Kloetzer} {and} \bibinfo{person}{C. Belta}.} \bibinfo{year}{2009}\natexlab{}.
\newblock \showarticletitle{Automatic deployment of distributed teams of robots from temporal logic motion specifications}.
\newblock \bibinfo{journal}{\emph{T-RO}} (\bibinfo{year}{2009}).
\newblock


\bibitem[Kloetzer and Mahulea(2012)]%
        {kloetzer2012ltl}
\bibfield{author}{\bibinfo{person}{M. Kloetzer} {and} \bibinfo{person}{C. Mahulea}.} \bibinfo{year}{2012}\natexlab{}.
\newblock \showarticletitle{{LTL} planning in dynamic environments}.
\newblock \bibinfo{journal}{\emph{IFAC Proceedings Volumes}} (\bibinfo{year}{2012}).
\newblock


\bibitem[Kurtz and Lin(2022)]%
        {kurtz2022mixed}
\bibfield{author}{\bibinfo{person}{V. Kurtz} {and} \bibinfo{person}{H. Lin}.} \bibinfo{year}{2022}\natexlab{}.
\newblock \showarticletitle{Mixed-Integer Programming for Signal Temporal Logic with Fewer Binary Variables}.
\newblock \bibinfo{journal}{\emph{IEEE L-CSS}} (\bibinfo{year}{2022}).
\newblock


\bibitem[Li et~al\mbox{.}(2021)]%
        {li2021safe}
\bibfield{author}{\bibinfo{person}{Y. Li}, \bibinfo{person}{E. Shahrivar}, {and} \bibinfo{person}{J. Liu}.} \bibinfo{year}{2021}\natexlab{}.
\newblock \showarticletitle{Safe linear temporal logic motion planning in dynamic environments}. In \bibinfo{booktitle}{\emph{IEEE/RSJ IROS}}.
\newblock


\bibitem[Lindemann et~al\mbox{.}(2023a)]%
        {lindemann2023safe}
\bibfield{author}{\bibinfo{person}{L. Lindemann}, \bibinfo{person}{M. Cleaveland}, \bibinfo{person}{G. Shim}, {and} \bibinfo{person}{G. Pappas}.} \bibinfo{year}{2023}\natexlab{a}.
\newblock \showarticletitle{Safe planning in dynamic environments using conformal prediction}.
\newblock \bibinfo{journal}{\emph{IEEE RAL}} (\bibinfo{year}{2023}).
\newblock


\bibitem[Lindemann and Dimarogonas(2018)]%
        {lindemann2018control}
\bibfield{author}{\bibinfo{person}{L. Lindemann} {and} \bibinfo{person}{D. Dimarogonas}.} \bibinfo{year}{2018}\natexlab{}.
\newblock \showarticletitle{Control barrier functions for signal temporal logic tasks}.
\newblock \bibinfo{journal}{\emph{IEEE L-CSS}} (\bibinfo{year}{2018}).
\newblock


\bibitem[Lindemann and Dimarogonas(2019)]%
        {lindemann2019control}
\bibfield{author}{\bibinfo{person}{L. Lindemann} {and} \bibinfo{person}{D. Dimarogonas}.} \bibinfo{year}{2019}\natexlab{}.
\newblock \showarticletitle{Control barrier functions for multi-agent systems under conflicting local signal temporal logic tasks}.
\newblock \bibinfo{journal}{\emph{IEEE L-CSS}} (\bibinfo{year}{2019}).
\newblock


\bibitem[Lindemann et~al\mbox{.}(2021)]%
        {lindemann2021reactive}
\bibfield{author}{\bibinfo{person}{L. Lindemann}, \bibinfo{person}{G. Pappas}, {and} \bibinfo{person}{D. Dimarogonas}.} \bibinfo{year}{2021}\natexlab{}.
\newblock \showarticletitle{Reactive and risk-aware control for signal temporal logic}.
\newblock \bibinfo{journal}{\emph{IEEE TAC}} (\bibinfo{year}{2021}).
\newblock


\bibitem[Lindemann et~al\mbox{.}(2023b)]%
        {lindemann2023conformal}
\bibfield{author}{\bibinfo{person}{L. Lindemann}, \bibinfo{person}{X. Qin}, \bibinfo{person}{J.~V Deshmukh}, {and} \bibinfo{person}{G. Pappas}.} \bibinfo{year}{2023}\natexlab{b}.
\newblock \showarticletitle{Conformal prediction for {STL} runtime verification}. In \bibinfo{booktitle}{\emph{ICCPS}}.
\newblock


\bibitem[Maler and Nickovic(2004)]%
        {maler2004monitoring}
\bibfield{author}{\bibinfo{person}{O. Maler} {and} \bibinfo{person}{D. Nickovic}.} \bibinfo{year}{2004}\natexlab{}.
\newblock \showarticletitle{Monitoring temporal properties of continuous signals}. In \bibinfo{booktitle}{\emph{FORMAT}}.
\newblock


\bibitem[Mehdipour et~al\mbox{.}(2019)]%
        {mehdipour2019arithmetic}
\bibfield{author}{\bibinfo{person}{N. Mehdipour}, \bibinfo{person}{C. Vasile}, {and} \bibinfo{person}{C. Belta}.} \bibinfo{year}{2019}\natexlab{}.
\newblock \showarticletitle{Arithmetic-geometric mean robustness for control from signal temporal logic specifications}. In \bibinfo{booktitle}{\emph{ACC}}.
\newblock


\bibitem[Messoudi et~al\mbox{.}(2022)]%
        {messoudi2022ellipsoidal}
\bibfield{author}{\bibinfo{person}{S. Messoudi}, \bibinfo{person}{S. Destercke}, {and} \bibinfo{person}{S. Rousseau}.} \bibinfo{year}{2022}\natexlab{}.
\newblock \showarticletitle{Ellipsoidal conformal inference for multi-target regression}. In \bibinfo{booktitle}{\emph{Conformal and Probabilistic Prediction with Applications}}. PMLR.
\newblock


\bibitem[Muthali et~al\mbox{.}(2023)]%
        {muthali2023multi}
\bibfield{author}{\bibinfo{person}{A. Muthali}, \bibinfo{person}{H. Shen}, \bibinfo{person}{S. Deglurkar}, \bibinfo{person}{M. Lim}, \bibinfo{person}{R. Roelofs}, \bibinfo{person}{A. Faust}, {and} \bibinfo{person}{C. Tomlin}.} \bibinfo{year}{2023}\natexlab{}.
\newblock \showarticletitle{Multi-agent reachability calibration with conformal prediction}. In \bibinfo{booktitle}{\emph{IEEE CDC}}. IEEE.
\newblock


\bibitem[Pant et~al\mbox{.}(2018)]%
        {pant2018fly}
\bibfield{author}{\bibinfo{person}{Y. Pant}, \bibinfo{person}{H. Abbas}, \bibinfo{person}{R. Quaye}, {and} \bibinfo{person}{R. Mangharam}.} \bibinfo{year}{2018}\natexlab{}.
\newblock \showarticletitle{Fly-by-logic: Control of multi-drone fleets with temporal logic objectives}. In \bibinfo{booktitle}{\emph{ICCPS}}.
\newblock


\bibitem[Pant et~al\mbox{.}(2017)]%
        {pant2017smooth}
\bibfield{author}{\bibinfo{person}{Y.~Vardhan Pant}, \bibinfo{person}{Houssam A.}, {and} \bibinfo{person}{R. Mangharam}.} \bibinfo{year}{2017}\natexlab{}.
\newblock \showarticletitle{Smooth operator: Control using the smooth robustness of temporal logic}. In \bibinfo{booktitle}{\emph{IEEE CCTA}}.
\newblock


\bibitem[Purohit and Saha(2021)]%
        {purohit2021dt}
\bibfield{author}{\bibinfo{person}{P. Purohit} {and} \bibinfo{person}{I. Saha}.} \bibinfo{year}{2021}\natexlab{}.
\newblock \showarticletitle{DT: Temporal Logic Path Planning in a Dynamic Environment}. In \bibinfo{booktitle}{\emph{IEEE/RSJ IROS}}.
\newblock


\bibitem[Raman et~al\mbox{.}(2014)]%
        {raman2014model}
\bibfield{author}{\bibinfo{person}{V. Raman}, \bibinfo{person}{A. Donz{\'e}}, \bibinfo{person}{M. Maasoumy}, \bibinfo{person}{R. Murray}, \bibinfo{person}{A. Sangiovanni-Vincentelli}, {and} \bibinfo{person}{S. Seshia}.} \bibinfo{year}{2014}\natexlab{}.
\newblock \showarticletitle{Model predictive control with signal temporal logic specifications}. In \bibinfo{booktitle}{\emph{IEEE CDC}}.
\newblock


\bibitem[Raman et~al\mbox{.}(2015)]%
        {raman2015reactive}
\bibfield{author}{\bibinfo{person}{V. Raman}, \bibinfo{person}{A. Donz{\'e}}, \bibinfo{person}{D. Sadigh}, \bibinfo{person}{R. Murray}, {and} \bibinfo{person}{S. Seshia}.} \bibinfo{year}{2015}\natexlab{}.
\newblock \showarticletitle{Reactive synthesis from signal temporal logic specifications}. In \bibinfo{booktitle}{\emph{HSCC}}.
\newblock


\bibitem[Sadigh and Kapoor(2016)]%
        {sadigh2016safe}
\bibfield{author}{\bibinfo{person}{D. Sadigh} {and} \bibinfo{person}{A. Kapoor}.} \bibinfo{year}{2016}\natexlab{}.
\newblock \showarticletitle{Safe control under uncertainty with probabilistic signal temporal logic}. In \bibinfo{booktitle}{\emph{RSS}}.
\newblock


\bibitem[Sadraddini and Belta(2015)]%
        {sadraddini2015robust}
\bibfield{author}{\bibinfo{person}{S. Sadraddini} {and} \bibinfo{person}{C. Belta}.} \bibinfo{year}{2015}\natexlab{}.
\newblock \showarticletitle{Robust temporal logic model predictive control}. In \bibinfo{booktitle}{\emph{Allerton}}.
\newblock


\bibitem[Sadraddini and Belta(2018)]%
        {sadraddini2018formal}
\bibfield{author}{\bibinfo{person}{S. Sadraddini} {and} \bibinfo{person}{C. Belta}.} \bibinfo{year}{2018}\natexlab{}.
\newblock \showarticletitle{Formal synthesis of control strategies for positive monotone systems}.
\newblock \bibinfo{journal}{\emph{IEEE TAC}} (\bibinfo{year}{2018}).
\newblock


\bibitem[Safaoui et~al\mbox{.}(2020)]%
        {safaoui2020control}
\bibfield{author}{\bibinfo{person}{S. Safaoui}, \bibinfo{person}{L. Lindemann}, \bibinfo{person}{D. Dimarogonas}, \bibinfo{person}{I. Shames}, {and} \bibinfo{person}{T. Summers}.} \bibinfo{year}{2020}\natexlab{}.
\newblock \showarticletitle{Control design for risk-based signal temporal logic specifications}.
\newblock \bibinfo{journal}{\emph{IEEE L-CSS}} (\bibinfo{year}{2020}).
\newblock


\bibitem[Salehinejad et~al\mbox{.}(2017)]%
        {salehinejad2017recent}
\bibfield{author}{\bibinfo{person}{H. Salehinejad}, \bibinfo{person}{S. Sankar}, \bibinfo{person}{J. Barfett}, \bibinfo{person}{E. Colak}, {and} \bibinfo{person}{S. Valaee}.} \bibinfo{year}{2017}\natexlab{}.
\newblock \showarticletitle{Recent advances in recurrent neural networks}.
\newblock \bibinfo{journal}{\emph{arXiv preprint arXiv:1801.01078}} (\bibinfo{year}{2017}).
\newblock


\bibitem[Scher et~al\mbox{.}(2023)]%
        {scher2023ensuring}
\bibfield{author}{\bibinfo{person}{G. Scher}, \bibinfo{person}{S. Sadraddini}, \bibinfo{person}{A. Yadin}, {and} \bibinfo{person}{H. Kress-Gazit}.} \bibinfo{year}{2023}\natexlab{}.
\newblock \showarticletitle{Ensuring Reliable Robot Task Performance through Probabilistic Rare-Event Verification and Synthesis}.
\newblock \bibinfo{journal}{\emph{arXiv preprint arXiv:2304.14886}} (\bibinfo{year}{2023}).
\newblock


\bibitem[Shafer and Vovk(2008)]%
        {shafer2008tutorial}
\bibfield{author}{\bibinfo{person}{G. Shafer} {and} \bibinfo{person}{V. Vovk}.} \bibinfo{year}{2008}\natexlab{}.
\newblock \showarticletitle{A Tutorial on Conformal Prediction.}
\newblock \bibinfo{journal}{\emph{JMLR}} (\bibinfo{year}{2008}).
\newblock


\bibitem[Srinivasan et~al\mbox{.}(2018)]%
        {srinivasan2018control}
\bibfield{author}{\bibinfo{person}{M. Srinivasan}, \bibinfo{person}{S. Coogan}, {and} \bibinfo{person}{M. Egerstedt}.} \bibinfo{year}{2018}\natexlab{}.
\newblock \showarticletitle{Control of multi-agent systems with finite time control barrier certificates and temporal logic}. In \bibinfo{booktitle}{\emph{IEEE CDC}}.
\newblock


\bibitem[Stamouli et~al\mbox{.}(2024)]%
        {stamouli2024recursively}
\bibfield{author}{\bibinfo{person}{C. Stamouli}, \bibinfo{person}{L. Lindemann}, {and} \bibinfo{person}{G. Pappas}.} \bibinfo{year}{2024}\natexlab{}.
\newblock \showarticletitle{Recursively feasible shrinking-horizon MPC in dynamic environments with conformal prediction guarantees}. In \bibinfo{booktitle}{\emph{L4DC}}. PMLR.
\newblock


\bibitem[Stankeviciute et~al\mbox{.}(2021)]%
        {stankeviciute2021conformal}
\bibfield{author}{\bibinfo{person}{K. Stankeviciute}, \bibinfo{person}{A. M~Alaa}, {and} \bibinfo{person}{M. van~der Schaar}.} \bibinfo{year}{2021}\natexlab{}.
\newblock \showarticletitle{Conformal time-series forecasting}.
\newblock \bibinfo{journal}{\emph{NeurIPS}} (\bibinfo{year}{2021}).
\newblock


\bibitem[Sun et~al\mbox{.}(2022)]%
        {sun2022multi}
\bibfield{author}{\bibinfo{person}{D. Sun}, \bibinfo{person}{J. Chen}, \bibinfo{person}{S. Mitra}, {and} \bibinfo{person}{C. Fan}.} \bibinfo{year}{2022}\natexlab{}.
\newblock \showarticletitle{Multi-agent motion planning from signal temporal logic specifications}.
\newblock \bibinfo{journal}{\emph{IEEE RAL}} (\bibinfo{year}{2022}).
\newblock


\bibitem[Tibshirani et~al\mbox{.}(2019)]%
        {tibshirani2019conformal}
\bibfield{author}{\bibinfo{person}{R. Tibshirani}, \bibinfo{person}{R. Foygel~Barber}, \bibinfo{person}{E. Candes}, {and} \bibinfo{person}{A. Ramdas}.} \bibinfo{year}{2019}\natexlab{}.
\newblock \showarticletitle{Conformal prediction under covariate shift}.
\newblock \bibinfo{journal}{\emph{NeurIPS}} (\bibinfo{year}{2019}).
\newblock


\bibitem[Tonkens et~al\mbox{.}(2023)]%
        {tonkensscalable}
\bibfield{author}{\bibinfo{person}{S. Tonkens}, \bibinfo{person}{S. Sun}, \bibinfo{person}{R. Yu}, {and} \bibinfo{person}{S. Herbert}.} \bibinfo{year}{2023}\natexlab{}.
\newblock \showarticletitle{Scalable Safe Long-Horizon Planning in Dynamic Environments Leveraging Conformal Prediction and Temporal Correlations}.
\newblock  (\bibinfo{year}{2023}).
\newblock


\bibitem[Tumu et~al\mbox{.}(2024)]%
        {tumu2024multi}
\bibfield{author}{\bibinfo{person}{R. Tumu}, \bibinfo{person}{M. Cleaveland}, \bibinfo{person}{R. Mangharam}, \bibinfo{person}{G. Pappas}, {and} \bibinfo{person}{L. Lindemann}.} \bibinfo{year}{2024}\natexlab{}.
\newblock \showarticletitle{Multi-modal conformal prediction regions by optimizing convex shape templates}. In \bibinfo{booktitle}{\emph{L4DC}}. PMLR.
\newblock


\bibitem[Ulusoy and Belta(2014)]%
        {ulusoy2014receding}
\bibfield{author}{\bibinfo{person}{A. Ulusoy} {and} \bibinfo{person}{C. Belta}.} \bibinfo{year}{2014}\natexlab{}.
\newblock \showarticletitle{Receding horizon temporal logic control in dynamic environments}.
\newblock \bibinfo{journal}{\emph{IJRR}} (\bibinfo{year}{2014}).
\newblock


\bibitem[Vasile et~al\mbox{.}(2017)]%
        {vasile2017sampling}
\bibfield{author}{\bibinfo{person}{C. Vasile}, \bibinfo{person}{V. Raman}, {and} \bibinfo{person}{S. Karaman}.} \bibinfo{year}{2017}\natexlab{}.
\newblock \showarticletitle{Sampling-based synthesis of maximally-satisfying controllers for temporal logic specifications}. In \bibinfo{booktitle}{\emph{IROS}}.
\newblock


\bibitem[Vollmer(2009)]%
        {vollmer2009newton}
\bibfield{author}{\bibinfo{person}{M. Vollmer}.} \bibinfo{year}{2009}\natexlab{}.
\newblock \showarticletitle{Newton's law of cooling revisited}.
\newblock \bibinfo{journal}{\emph{European Journal of Physics}} (\bibinfo{year}{2009}).
\newblock


\bibitem[Vovk et~al\mbox{.}(2005)]%
        {vovk2005algorithmic}
\bibfield{author}{\bibinfo{person}{V. Vovk}, \bibinfo{person}{A. Gammerman}, {and} \bibinfo{person}{G. Shafer}.} \bibinfo{year}{2005}\natexlab{}.
\newblock \bibinfo{booktitle}{\emph{Algorithmic learning in a random world}}.
\newblock


\bibitem[Xiao et~al\mbox{.}(2021)]%
        {xiao2021high}
\bibfield{author}{\bibinfo{person}{W. Xiao}, \bibinfo{person}{C. Belta}, {and} \bibinfo{person}{C. Cassandras}.} \bibinfo{year}{2021}\natexlab{}.
\newblock \showarticletitle{High order control lyapunov-barrier functions for temporal logic specifications}. In \bibinfo{booktitle}{\emph{ACC}}.
\newblock


\bibitem[Yu et~al\mbox{.}(2019)]%
        {yu2019review}
\bibfield{author}{\bibinfo{person}{Y. Yu}, \bibinfo{person}{X. Si}, \bibinfo{person}{C. Hu}, {and} \bibinfo{person}{J. Zhang}.} \bibinfo{year}{2019}\natexlab{}.
\newblock \showarticletitle{A review of recurrent neural networks: LSTM cells and network architectures}.
\newblock \bibinfo{journal}{\emph{Neural computation}} (\bibinfo{year}{2019}).
\newblock


\bibitem[Zhao et~al\mbox{.}(2024)]%
        {zhao2024robust}
\bibfield{author}{\bibinfo{person}{Y. Zhao}, \bibinfo{person}{B. Hoxha}, \bibinfo{person}{G. Fainekos}, \bibinfo{person}{J. Deshmukh}, {and} \bibinfo{person}{L. Lindemann}.} \bibinfo{year}{2024}\natexlab{}.
\newblock \showarticletitle{Robust conformal prediction for stl runtime verification under distribution shift}. In \bibinfo{booktitle}{\emph{ICCPS}}. IEEE.
\newblock


\end{thebibliography}

\end{document}